\newcolumntype{P}[1]{>{\centering\arraybackslash}p{#1}}
\newcolumntype{M}[1]{>{\centering\arraybackslash}m{#1}}
\begin{document}
\title{Arbitrary Pattern Formation by Opaque Fat Robots with Lights\thanks{The first two authors are supported by NBHM, DAE, Govt. of India and CSIR, Govt. of India, respectively.}}
%
%
\author{Kaustav Bose\inst{1}\orcidID{0000-0003-3579-1941} \and
Ranendu Adhikary\inst{1}\orcidID{0000-0002-9473-2645} \and
Manash Kumar Kundu\inst{2}\orcidID{0000-0003-4179-8293} \and
Buddhadeb Sau\inst{1}}
%
%
\institute{Department of Mathematics, Jadavpur University, Kolkata, India \and
Gayeshpur Government Polytechnic, Kalyani, India\\
\email{\{kaustavbose.rs, ranenduadhikary.rs, manashkrkundu.rs, buddhadeb.sau\}@jadavpuruniversity.in}
}
\maketitle              
\begin{abstract}
\textsc{Arbitrary Pattern Formation} is a widely studied problem in autonomous robot systems. The problem asks to design a distributed algorithm that moves a team of autonomous, anonymous and identical mobile robots to form any arbitrary pattern given as input. The majority of the existing literature investigates this problem for robots with unobstructed visibility. In a few recent works, the problem has been studied in the obstructed visibility model, where the view of a robot can be obstructed by the presence of other robots. However, in these works, the robots have been modelled as dimensionless points in the plane. In this paper, we have considered the problem in the more realistic setting where the robots have a physical extent. In particular, the robots are modelled as opaque disks. Furthermore, the robots operate under a fully asynchronous scheduler. They do not have access to any global coordinate system, but agree on the direction and orientation of one coordinate axis. Each robot is equipped with an externally visible light which can assume a constant number of predefined colors. In this setting, we have given a complete characterization of initial configurations from where any arbitrary pattern can be formed by a deterministic distributed algorithm. 

\keywords{Distributed algorithm \and
Arbitrary Pattern Formation \and
Leader election \and
Opaque fat robots \and
Luminous robots \and
Asynchronous scheduler}
\end{abstract}

\section{Introduction}

 \textsc{Arbitrary Pattern Formation} or $\mathcal{APF}$ is a fundamental coordination problem for distributed multi-robot systems. Given a team of autonomous mobile robots,  the goal is to design a distributed algorithm that guides the robots to form any specific but arbitrary geometric pattern given to the robots as input. \textsc{Arbitrary Pattern Formation} is closely related to the \textsc{Leader Election} problem where a unique robot from the team is to be elected as the leader. In the traditional framework of theoretical studies, the robots are modelled as \emph{autonomous} (there is no central control), \emph{homogeneous} (they execute the same distributed algorithm), \emph{anonymous} (they have no unique identifiers) and \emph{identical} (they are indistinguishable by their appearance) computational entities that can freely move in the plane. Each robot is equipped with sensor capabilities to perceive the positions of other robots. The robots do not have access to any global coordinate system. The robots operate in \textsc{Look-Compute-Move} \emph{(LCM)} cycles:  upon becoming active, a robot takes a snapshot of the positions of the other robots (\textsc{Look}), then computes a destination based on the snapshot (\textsc{Compute}), and then moves towards the destination along a straight line (\textsc{Move}).

 The \textsc{Arbitrary Pattern Formation} problem has been extensively studied in the literature in various settings (See \cite{Yamashita99,Yamashita10,Flocchini08,Dieudonne10,Cicerone17,bose18} and references therein). Until recently, the problem had only been studied for robots with unobstructed visibility. In \cite{VaidyanathanST18}, the problem was first considered in the \emph{opaque robots} or \emph{obstructed visibility} model which assumes that the visibility of a robot can be obstructed by the presence of other robots. This is a more realistic model for robots equipped with camera sensors. They also assumed that the robots are equipped with persistent visible lights that can assume a constant number of predefined colors. This is known as the \emph{luminous robot} model, introduced by Peleg \cite{peleg2005distributed}, where the lights serve both as a medium of weak explicit communication and also as a form of memory. In \cite{VaidyanathanST18}, the robots are first brought to a configuration in which each robot can see all other robots, and then \textsc{Leader Election} is solved by a randomized algorithm. The first fully deterministic solutions for \textsc{Leader Election} and \textsc{Arbitrary Pattern Formation} were given in \cite{BoseKAS19} for robots whose local coordinate systems agree on the direction and orientation of one coordinate axis. However, in both \cite{BoseKAS19,VaidyanathanST18}, the robots were modelled as dimensionless points in the plane. This assumption is obviously unrealistic, as real robots have a physical extent. In this work, we extend the results of \cite{BoseKAS19} to the more realistic \emph{opaque fat robots} model \cite{czyzowicz2009gathering,agathangelou2013distributed}. Furthermore, our algorithm also works for robots with \emph{non-rigid} movements (a robot may stop before it reaches its computed destination), whereas the algorithm of \cite{BoseKAS19} requires robots to have \emph{rigid} movements (a robot reaches its computed destination without any interruption). Also, the total number of moves executed by the robots in our algorithm is asymptotically optimal. The contribution of this paper is summarized in Table \ref{tab}.

\begin{table}
\centering
\begin{tabular}{| M{1.3cm} | M{1.2cm}| M{2.2cm}| M{1.6cm}| M{1.2cm}| M{1.7cm}| M{2cm}|} 
\hline
      & Robots  & Agreement in coordinate system & Scheduler & No. of colors used & Movement  & Total no. of moves\\ 
\hline
\cite{BoseKAS19} &	Point	& One-axis agreement		 & \textsc{Async}     &  6	          & Rigid     & $O(n^2)$\\ 
\hline
This paper & Fat  	& One-axis agreement		 & \textsc{Async}     &  10               & Non-rigid & $\Theta(n)$\\ 
\hline
\end{tabular}
\caption{Comparison of this work with previous ones.}
\label{tab}
\end{table}

\section{Model and Definitions}

In this section, we shall formally describe the robot model and also present the necessary definitions and notations that will be used in the rest of the paper.

\textbf{Robots.} We consider a set of $n \geq 3$ autonomous, anonymous, homogeneous and identical fat robots. Each robot is modelled as a disk of diameter equal to 1 unit. The robots do not have access to any global coordinate system, but their local coordinate systems agree on the direction and orientation of the $X$-axis. They also agree on the unit of length as the diameter of the robots are same and taken as 1 unit. 


\textbf{Lights.} Each robot is equipped with an externally visible light which can assume a constant number of colors. Our algorithm will require in total ten colors, namely \texttt{off}, \texttt{terminal}, \texttt{interior}, \texttt{failed}, \texttt{symmetry}, \texttt{ready}, \texttt{move}, \texttt{switch off}, \texttt{leader} and \texttt{done}. Initially all robots have their lights set to \texttt{off}.

\textbf{Visibility.} The visibility range of a robot is unlimited, but can be obstructed by the presence of other robots. Formally, a point $p$ in the plane is visible to a robot $r_i$ if and only if there exists a point $x_i$ on the boundary of $r_i$ such that the line segment joining $p$ and $x_i$ does not contain any point of any other robot. This implies that a robot $r_i$ can see another robot $r_j$ if and only if there is at least one point on the boundary of $r_j$ that is visible to $r_i$. Also, if $r_i$ can see any portion of the boundary of $r_j$, then we assume that it can determine the position of (the center of) $r_j$.

\textbf{Look-Compute-Move cycles.} The robots, when active, operate according to the so-called \textsc{Look-Compute-Move} \emph{(LCM)} cycle. In each cycle, a previously idle robot wakes up and executes the following steps. In \textsc{Look}, a robot takes the snapshot of the positions of the robots visible to it (represented in its own coordinate system), along with their respective colors. Then in \textsc{Compute}, based on the perceived configuration, the robot performs computations according to a deterministic algorithm to decide a destination point and a color. Finally in \textsc{Move}, it sets its light to the decided color and moves towards the destination point. When a robot transitions from one LCM cycle to the next, all of its local memory (past computations and snapshots) are erased, except for the color of the light.

\textbf{Scheduler.} We assume that the robots are controlled by a fully asynchronous adversarial scheduler. The robots are activated independently and each robot executes its cycles independently. The amount of time spent in \textsc{Look, Compute, Move} and inactive states is finite but unbounded, unpredictable and not same for different robots. As a result, a robot can be seen while moving, and hence, computations can be made based on obsolete information about positions. 


\textbf{Movement.} We assume that the robots have \emph{non-rigid} movements. This means that a robot may stop before it reaches its destination. However, there is a fixed $\delta > 0$ so that each robot traverses at least the distance $\delta$ unless its destination is closer than $\delta$. The value of $\delta$, however, is not known to the robots. The existence of a fixed $\delta$ is necessary, because otherwise, a robot may stop after moving distances $\frac{1}{2}, \frac{1}{4}, \frac{1}{8}, \ldots$ and thus, not allowing any robot to traverse a distance of more than 1.

\textbf{Definitions and Notations.} We shall denote the set of robots by $\mathcal{R} = \{r_1, r_2, \dots, r_n\}$, $n \geq 3$. When we say that a robot is at a point $p$ on the plane, we shall mean that its center is at $p$. For any time $t$, the configuration of the robots at time $t$, denoted by $\mathbb{C}(t)$ or simply $\mathbb{C}$, is a sequence $(p_1(t), p_2(t), \dots, p_n(t))$ of $n$ points on the plane, where $p_i(t)$ is the position of (the center of) the robot $r_i$ at $t$. At any time $t$, $r(t).light$ or simply $r.light$ will denote the color of the light of $r$ at $t$. With respect to the local coordinate system of a robot, positive and negative directions of the $X$-axis will be referred to as \emph{right} and \emph{left} respectively, and the positive and negative directions of the $Y$-axis will be referred to as \emph{up} and \emph{down} respectively. Since the robots agree on the $X$-axis, they agree on horizontal and vertical. They also agree on left and right, but not on up and down. For a robot $r$, $\mathcal{L}_{V}(r)$ and $\mathcal{L}_{H}(r)$ are respectively the vertical and horizontal lines passing through the center of $r$. We denote by $\mathcal{H}_{U}^{O}(r)$ (resp. $\mathcal{H}_{U}^{C}(r)$) and $\mathcal{H}_{B}^{O}(r)$ (resp. $\mathcal{H}_{B}^{C}(r)$) the upper and bottom open (resp. closed) half-planes delimited by $\mathcal{L}_{H}(r)$ respectively. Similarly, $\mathcal{H}_{L}^{O}(r)$ (resp. $\mathcal{H}_{L}^{C}(r)$) and $\mathcal{H}_{R}^{O}(r)$ (resp. $\mathcal{H}_{R}^{C}(r)$) are the left and right open (resp. closed) half-planes delimited by $\mathcal{L}_{V}(r)$ respectively. For a configuration $\mathbb{C}$, a subset of robots that are on the same vertical line will be called a \emph{batch}. Thus, any configuration $\mathbb{C}$ can be partitioned into batches $B_1, \dots, B_k$, ordered from left to right. The vertical line passing through the centers of the robots of a batch will be called the \emph{central axis} of that batch. When we say `the distance between a batch $B_i$ and a robot $r$ (resp. another batch $B_j$)', we shall mean the horizontal distance between the central axis of $B_i$ and the center of $r$  (resp. central axis of $B_j$). A robot $r$ belonging to batch $B_i$ will be called \emph{non-terminal} if it lies between two other robots of  $B_i$, and otherwise it will be called \emph{terminal}. Consider any batch $B_j$ whose central axis is $\mathcal{S}$ and a horizontal line $\mathcal{T}$.  Let $\mathcal{H}_1$ and $\mathcal{H}_2$ be the closed half-planes delimited by $\mathcal{T}$. For each $\mathcal{H}_i, i = 1, 2$, consider the distances of the robots on $\mathcal{S} \cap \mathcal{H}_i$ from $\mathcal{T}$ arranged in increasing order. The string of real numbers thus obtained is denoted by $\lambda_i$. To make the lengths of the strings $\lambda_1$ and $\lambda_2$ equal, null elements $\Phi$ may be appended to the shorter string. Now the two strings are different if and only if the robots of $B_j$ are not in symmetric positions with respect to $\mathcal{T}$. In that case, $\mathcal{H}_i$ will be called the \emph{dominant half with respect to $\mathcal{T}$ and $B_j$} if $\lambda_i$ is the lexicographically smaller sequence (setting $x < \Phi$ for any $x \in \mathbb{R}$). 

%
%

\textbf{Problem Definition.} Consider an initial configuration of $n$ fat opaque robots in the Euclidean plane, all having their lights set to \texttt{off}. Each robot is given as input, a pattern $\mathbb{P}$, which is a list of $n$ distinct elements from $\mathbb{R}^2_{\geq 0} = \{ (a,b) \in \mathbb{R}^2 | a, b \geq 0 \}$. The \textsc{Arbitrary Pattern Formation} requires to design a distributed algorithm that guides the robots to a configuration that is similar to $\mathbb{P}$ with respect to translation, reflection, rotation and uniform scaling. 


\section{The Algorithm}

The main result of the paper is Theorem \ref{thm_main}. The proof of the `only if' part is the same as in case for point robots, proved in \cite{BoseKAS19}. The `if' part will follow from the algorithm presented in this section.

\begin{theorem}\label{thm_main}
 For a set of opaque luminous fat robots with non-rigid movements and having one axis agreement, $\mathcal{APF}$ is deterministically solvable if and only if the initial configuration is not symmetric with respect to a line $\mathcal{K}$ which 1) is parallel to the agreed axis and 2) does not pass through the center of any robot.
 \end{theorem}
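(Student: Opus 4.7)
My plan is to prove the `if' direction constructively by exhibiting a distributed algorithm that, starting from any initial configuration which is not symmetric across a horizontal line avoiding every center, terminates with the robots in a configuration similar to the input pattern $\mathbb{P}$. The algorithm will proceed in two broad phases: first elect a unique leader (equivalently, fix a common up/down direction), and then have the remaining robots move one at a time to their target positions. The ten listed colors naturally suggest such a phased protocol, with \texttt{off}/\texttt{terminal}/\texttt{interior} used for local role labelling, \texttt{failed}/\texttt{symmetry}/\texttt{ready}/\texttt{move}/\texttt{switch off} used during leader election and movement handshakes, and \texttt{leader}/\texttt{done} marking the final states.

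For the leader-election phase, I would have every robot first label itself within its own vertical batch by setting its light to \texttt{terminal} or \texttt{interior}; this can be decided locally, since a robot always sees the two neighbours (if any) immediately above and below it along $\mathcal{L}_V(r)$. The rightmost batch $B_k$ is globally distinguished by the agreed horizontal axis, so I would try to extract the leader from $B_k$ first: if $B_k$ has a unique median or an asymmetric distance profile $\lambda_1 \neq \lambda_2$ with respect to every horizontal midline, the dominant half (in the paper's lexicographic ordering) yields an agreed `up' and the appropriate terminal robot can become \texttt{leader}; otherwise the hypothesis of the theorem guarantees some other batch breaks the global horizontal symmetry, and I march inward from the right until such a batch is found. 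The \texttt{symmetry} and \texttt{failed} colors are used to record tentative decisions that must be revoked when a further-inward batch overrides them.

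Once the leader is in place, a common coordinate system is fixed and every robot can deterministically enumerate both the current set of positions and the target set $\mathbb{P}$ (appropriately scaled and placed relative to the leader). Non-leader robots then execute their moves sequentially, each cycling through \texttt{ready}, \texttt{move}, \texttt{switch off}, \texttt{done}, so that at most one robot is in motion at any time and the leader can certify completion between consecutive steps. The target ordering should be chosen so that each mover's straight-line trajectory, thickened by the unit-diameter disk, is free of other disks and does not cut the leader off from the remaining active robots; when the last robot switches to \texttt{done}, the pattern is formed.

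The principal obstacle will be reconciling opacity with fatness. Unlike the point-robot case of~\cite{BoseKAS19}, a fat disk lying between two others can completely occlude them both, so a snapshot taken during any move may omit robots that the compute step assumes are present; the scheduling must therefore keep the leader and at least one terminal of every batch visible to every active robot, and collision-freedom must be argued for disks rather than points. Asynchrony and non-rigidity compound this, because a robot may be observed at any intermediate point along its move after it has traversed at least $\delta$, so every such `in-transit' position must be distinguishable from a valid resting state purely through the colors; this is what forces the full palette of ten colors. Finally, meeting the $\Theta(n)$ move bound advertised in Table~\ref{tab} requires each robot to move only $O(1)$ times, which rules out any restart-style correction and will be the most delicate quantitative constraint to satisfy in the detailed construction.
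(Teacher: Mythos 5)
There are two genuine gaps here. First, the theorem is an equivalence, and you prove only the `if' direction; the `only if' direction (impossibility of $\mathcal{APF}$ from a configuration admitting a horizontal symmetry line through no center) is never addressed. The paper disposes of it by observing that the adversary argument for point robots in \cite{BoseKAS19} carries over verbatim, but your writeup must at least invoke this.

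Second, and more seriously, your leader-election criterion is not sound as stated. The unsolvable symmetry is with respect to a \emph{single global} horizontal line, so the algorithm must test every batch against the \emph{same} candidate line $\mathcal{L}$; testing whether the rightmost batch ``has a unique median or an asymmetric distance profile with respect to every horizontal midline'' conflates per-batch midlines with the global axis, and gives no rule for which line the inner batches should be tested against when you ``march inward.'' Worse, the robots of a batch occlude one another (a non-terminal fat robot sees only its two vertical neighbours), so a batch cannot compute its own median or its $\lambda_i$ strings; this is precisely why the paper (i) reduces the anchor batch to exactly two mutually visible terminal robots and pushes them at least $\frac{n+3}{2}$ units away so that they can see the entire next batch, (ii) lets the perpendicular bisector of that pair define $\mathcal{L}$ once and for all, and (iii) has each batch's symmetry judged by the \emph{previous} batch's terminals and the knowledge of $\mathcal{L}$ relayed via the \texttt{move}-colored pair. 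Without an anchoring and separation step of this kind, your scheme can stall (every batch symmetric about its own, mutually distinct, midline) or elect two leaders (two batches each asymmetric about different lines). Your final paragraph correctly names the remaining difficulties --- occlusion, asynchrony, non-rigidity, and the $\Theta(n)$ move budget --- but defers them all to ``the detailed construction,'' which is where the actual content of the paper's proof lies.
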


For the rest of the paper, we shall assume that the initial configuration $\mathbb{C}(0)$ does not admit the unsolvable symmetry stated in Theorem \ref{thm_main}. Our algorithm works in two \emph{stages}, namely \emph{leader election} and \emph{pattern formation from leader configuration}. The first stage is again divided into two phases, namely \emph{Phase 1} and \emph{Phase 2}. In the first stage, a single robot will be elected as the leader of the swarm. Since the robots do not have access to any global coordinate system, they do not agree on how the given pattern $\mathbb{P}$ would be realized in the plane. With the help of the elected leader, the robots can implicitly agree on a common coordinate system. Once an agreement on a common coordinate system is achieved, the robots will arrange themselves to form the given pattern in the agreed coordinate system in the second stage. Since the robots are oblivious, in each LCM cycle, a robot has to infer the current stage or phase from its local view. This is described in Algorithm \ref{main_algorithm}.

  \begin{minipage}{.9\linewidth}
  \begin{algorithm}[H]
    \setstretch{.1}
    \SetKwInOut{Input}{Input}
    \SetKwInOut{Output}{Output}
    \SetKwProg{Fn}{Function}{}{}
    \SetKwProg{Pr}{Procedure}{}{}

    \Input{The configuration of robots visible to me.}
    
    \Pr{\textsc{ArbitraryPatternFormation()}}{

    \uIf(\tcp*[f]{stage 2}){there is a robot with light set to \texttt{leader}}{
    
	\textsc{PatternFormationFromLeaderConfiguration()}
    
      }
      
      \Else(\tcp*[f]{stage 1}){
    
	\uIf{(the first batch has two robots with light set to \texttt{terminal}) and (the lights of all robots of the second batch are set to same color) and (the distance between the first and second batch is at least $\frac{n+3}{2}$ units)}{\textsc{Phase2()}}
	
	\uElseIf{there is at least one robot with light set to \texttt{failed}, \texttt{symmetry}, \texttt{ready}, \texttt{move} or \texttt{switch off}}{
	
	    \textsc{Phase2()} }
	    
	\Else{\textsc{Phase1()}}

    }
    
    }
\caption{Arbitrary Pattern Formation}
    \label{main_algorithm} 
\end{algorithm}
\end{minipage}

\subsection{Leader Election}

In the leader election stage, a unique robot $r_l$ will elect itself as leader by setting its light to \texttt{leader} (while the lights of all other robots should be set to \texttt{off}). We want the configuration to satisfy some additional properties as well, that will be useful in the second stage of the algorithm. In particular, we want 1) all the non-leader robots to lie inside  $\mathcal{H}_{R}^{O}(r_l) \cap \mathcal{H}$ where $\mathcal{H} \in \{\mathcal{H}_{U}^{O}(r_l), \mathcal{H}_{B}^{O}(r_l)\}$, and 2) the distance of any non-leader robot  from $\mathcal{L}_H(r_l)$ to be at least 2 units. We shall call this a \emph{leader configuration}, and call $r_l$ the \emph{leader}.

\subsection{Phase 1}

Since the robots already have an agreement on left and right, if there is a unique leftmost robot, i.e., the first batch has only one robot, then that robot, say $r$, can identify this from its local view and elect itself as the leader. However, the robot $r$ will not immediately change its light to \texttt{leader} as the additional conditions of a leader configuration might not be yet satisfied. So, it will start executing the procedure \textsc{BecomeLeader()} (See Fig. \ref{Fig: proof0}). Only after these conditions are satisfied, $r$ will change its light to \texttt{leader}. However, there might be more than one leftmost robots in the configuration. In the extreme case, all the robots may lie on the same vertical line, i.e., there may be only one batch. So if there are more than one leftmost robots, the aim of Phase 1 is to move the two terminal robots of the first batch leftwards by the same amount. We also want the distance between (the central axes of) the first batch and second batch in the new configuration to be at least $\frac{n+3}{2}$ units. Therefore, at the end of Phase 1, we shall either have a leader configuration or have at least two batches in the configuration with the first batch having exactly two robots and at least $\frac{n+3}{2}$ units to the left of the second batch. In the second case, the lights of the two robots of the first batch will be set to \texttt{terminal}, lights of all robots of the second batch will be set to either \texttt{interior} or \texttt{off}, and all other robots have lights set to \texttt{off}. Due to space restrictions, we will not describe the algorithm here in much detail. A formal pseudocode description of the algorithm is given in Algorithm \ref{algo:phase1}. Further details can be found in Appendix \ref{appendix_p1}.

  \begin{algorithm}[H]
    \setstretch{1}
    \SetKwInOut{Input}{Input}
    \SetKwInOut{Output}{Output}
    \SetKwProg{Fn}{Function}{}{}
    \SetKwProg{Pr}{Procedure}{}{}

    \Pr{\textsc{Phase1()}}{

    $r \leftarrow$ myself
    
    \uIf{$r.light =$ \texttt{off}}{
    
	\uIf{I am in the first batch and I am the only robot in my batch}{\textsc{BecomeLeader()}}
	
	\uElseIf{I am in the first batch and I am not the only robot in my batch}{
	
	  \uIf{I am terminal}{$r.light \leftarrow$ \texttt{terminal}}
	  \Else{$r.light \leftarrow$ \texttt{interior}}
	  }
	  
	\ElseIf{there is a robot with light \texttt{interior} on $\mathcal{L}_{V}(r)$}{
	
	  \uIf{I am not terminal}{$r.light \leftarrow$ \texttt{interior} \label{code: p1_0}}
	  
	  \ElseIf{(I am terminal) and (there is exactly one robot $r'$ in $\mathcal{H}_{L}^{O}(r)$) and ($r'.light =$ \texttt{terminal})}{
	  
	    $r.light \leftarrow$ \texttt{terminal}
	    
	    Move $\frac{n+3}{2}$ units to the left
	    	    
	    }

	}

	}

      \ElseIf{$r.light =$ \texttt{terminal}}{

	\uIf{there is a robot on $\mathcal{L}_{V}(r)$ with light \texttt{interior}}{Move $\frac{n+3}{2}$ units to the left}

      \uElseIf{there is a robot on $\mathcal{L}_{V}(r)$ with light \texttt{terminal}}{
      
	$d \leftarrow$ my horizontal distance from the leftmost robot in $\mathcal{H}_{R}^{O}(r)$

	\If{$d < \frac{n+3}{2}$}{Move $\frac{n+3}{2}-d$ units to the left}
      
      }
      
      \ElseIf{there is a robot $r'$ in $\mathcal{H}_{L}^{O}(r)$ with light \texttt{terminal} \label{code: p1_1}}{
	
	$d \leftarrow$ my horizontal distance from $r'$
	
	Move $d$ units to the left
	
      }
      
      }

    }
 

\caption{Phase1}
    \label{algo:phase1} 
\end{algorithm}

\subsection{Phase 2}\label{sec: p2}

Assume that at the end of Phase 1, we have $k \geq 2$ batches and exactly two robots $r^1_1$ and $r^1_2$ in the first batch $B_1$ with light \texttt{terminal} that are at least $\frac{n+3}{2}$ units to the left of $B_2$. So now we are in Phase 2. Let $\mathcal{L}$ be the horizontal line passing through the mid-point of the line segment joining $r^1_1$ and $r^1_2$. Let $\mathcal{H}_1$ and $\mathcal{H}_2$ be the two open half-planes delimited by $\mathcal{L}$ such that $r^1_1 \in \mathcal{H}_1$ and $r^1_2 \in \mathcal{H}_2$. Our algorithm will achieve the following. Define $i > 1$ to be the smallest integer such that $B_{i}$ is either (Case 1) asymmetric with respect to $\mathcal{L}$, or (Case 2) symmetric with respect to $\mathcal{L}$, but it has a robot lying on $\mathcal{L}$. In Case 1, a terminal robot from $B_{i-1}$ will become the leader and in Case 2, the robot from $B_{i}$ that lies on $\mathcal{L}$ will become the leader. From left to right, terminal robots of different batches will attempt to elect a leader either by electing itself as the leader or by asking a robot of the next batch to become the leader. In particular, when a batch  $B_{j}$ tries to elect leader, its terminal robots  will check whether the next batch $B_{j+1}$ is asymmetric or symmetric with respect to $\mathcal{L}$. In the first case, the terminal robot of $B_{j}$ lying in the dominant half with respect to $\mathcal{L}$ and $B_{j+1}$ will elect itself as the leader. In the later case, the terminal robots of $B_{j}$, using light, will communicate to the robots of $B_{j+1}$ the fact that $B_{j+1}$ is symmetric with respect to $\mathcal{L}$. If $\mathcal{L}$ passes through the center of a robot of $B_{j+1}$, then that robot will elect itself as the leader. Now there are three issues regarding the implementation of this strategy, which we shall discuss in the following three sections.
 
\begin{enumerate} 
 \item What happens when the robots of $B_{j}$ are unable to see all the robots of $B_{j+1}$?
 
  \item How will the robots ascertain $\mathcal{L}$ from their local view?
 
 \item How will all the conditions of a leader configuration be achieved?
\end{enumerate}

\subsubsection{Coordinated Movement of a Batch}

When the terminal robots of a batch $B_j$ attempt to elect leader, they need to see all the robots of the next batch $B_{j+1}$. But since the robots are fat and opaque, a robot may not be able to see all the robots of the next batch (See Fig. \ref{view}). However, each robot of two consecutive batches will be able to see all robots of the other batch if the two batches are more than 1 unit distance apart. Recall that at the beginning of Phase 2, the robots of $B_{1}$ are at least $\frac{n+3}{2}$ units to the left of the robots of $B_{2}$. Therefore, when the robots of the first batch attempt to elect leader, they are able to see all robots of $B_2$. Now consider the case where the terminal robots of $B_j, j > 1$ are trying to elect leader. Therefore, the first $j-1$ batches must have failed to elect leader. This implies that the first $j$ batches are symmetric with respect to $\mathcal{L}$ and $\mathcal{L}$ does not pass through the center of a robot of the first $j$ batches. After the terminal robots of $B_{j-1}$ fail to elect leader, they will change their lights to \texttt{failed} and ask the next batch $B_j$ to try to elect a leader. Then the robots of $B_{j}$ will move left to position themselves exactly at a distance $1 + \frac{1}{n}$ units from the robots of $B_{j-1}$. It can be shown that (See the proof of Theorem \ref{thm_phase2}.) $B_j$ will have sufficient space to execute the movement and also, their horizontal distance from the robots of $B_{j+1}$ will be at least 2 units after the movement. So, after the movements, the terminal robots of $B_j$ can see all the robots of $B_{j+1}$.

However, since the scheduler is asynchronous and the movements are non-rigid, the robots of $B_j$ can start moving at different times, move at different speeds and by different amounts. Then there could be many ways in which our algorithm can fail. For example, suppose that a few robots of $B_j$ have seen the two terminal robots of $B_{j-1}$ with light \texttt{failed} and thereafter, have completed their moves in one go, while the rest of the robots of the batch are yet to move. Some of these robots have pending moves, while some may not move at all as they may not see the robots with light \texttt{failed} in the new configuration. Now the robots that have completed their moves earlier may now erroneously conclude that these robots belong to $B_{j+1}$, and one of them may find itself eligible to become leader. Meanwhile, one of the robots with a pending move might complete its move and finds itself eligible to become leader in the new configuration. Thus we may end up with two robots electing themselves as leader. Therefore, we have to carefully coordinate the movements of the batch so that they do not get disbanded. We have to ensure that all the robots of the batch remain vertically aligned after their moves, and also the completion of the moves of the batch as a whole must be detectable. We will need two extra colors for this. When the robots of $B_{j}$ find two terminal robots of $B_{j-1}$ with light \texttt{failed}, they will not immediately move; they will first change their lights to \texttt{ready} (See Fig. \ref{Fig: co_move}). Having all robots of $B_{j}$ with light set to \texttt{ready} will help these robots to identify their batchmates. On one hand, a robot that moves first will be able to identify the robots from its batch that are lagging behind and also detect when every one has completed their moves. On the other hand, a robot that has lagged behind will be able to remember that it has to move (from its own light \texttt{ready}) and determine how far it should move (from robots with light \texttt{ready} on its left) even if it can not see the terminal robots of batch $B_{j-1}$. Therefore, before moving, all robots of $B_{j}$ must change their lights to \texttt{ready}. But they can not verify if all their batchmates have changed their lights as they can not see all the robots of their batch. But the robots of $B_{j-1}$ are able to see all the robots of $B_{j}$, and thus can certify this. So when all the robots $B_{j}$ have changed their lights to \texttt{ready}, the terminal robots of $B_{j-1}$ will confirm this by turning their lights to \texttt{move}. Only after this, the robots of $B_{j}$ will start moving. The robots will be able to detect that the movement of the batch has completed by checking that its distance from $B_{j-1}$ is $1+\frac{1}{n}$ and there are no robots with light \texttt{ready} on its right. When it detects that the movement of the batch has completed, it will try to elect leader if it is terminal, otherwise, it will change its light to \texttt{off}.

\subsubsection{Electing Leader from Local View}

When the terminal robots of a batch will attempt to elect leader, they will require the knowledge of $\mathcal{L}$. Therefore, as different batches try to elect leader from left to right, the knowledge of $\mathcal{L}$ also needs to be propagated along the way with the help of lights. Consider the terminal robots $r^j_1$ and $r^j_2$ of a batch $B_j, j \geq 1$, that are attempting to elect a leader. In order to do so, they need two things: 1) the knowledge of $\mathcal{L}$, and 2) a full view of the next batch $B_{j+1}$. First consider the case $j = 1$. The terminal robots of the first batch $r^1_1$ and $r^1_2$ (with lights set to \texttt{terminal}) obviously have the knowledge of $\mathcal{L}$ as it is the horizontal line passing through the mid-point of the line segment joining them. Also, since $r^1_1$ and $r^1_2$ are at least $\frac{n+3}{2}$ units apart from the robots of $B_2$, they can see all the robots of $B_2$. Now suppose that a batch $B_j$, $j > 1$, is attempting to elect leader. Then as discussed in the last section, the robots of $B_j$ are horizontally exactly $1+\frac{1}{n}$ units to the right of the robots of $B_{j-1}$ and at least 2 units to the left of the robots of $B_{j+1}$. Therefore, $r^j_1$ and $r^j_2$ can see all the robots of both batches $B_{j-1}$ and $B_{j+1}$. Now since $B_{j}$ is attempting to elect leader, it implies that the first $j-1$ batches have failed to break symmetry. Hence, the first $j$ batches are symmetric with respect to $\mathcal{L}$. In particular, $\mathcal{L}$ passes through the mid-point of the line segment joining the terminal robots of $B_{j-1}$. Since $r^j_1$ and $r^j_2$ can see the terminal robots of $B_{j-1}$ (having lights set to \texttt{move}), they can determine $\mathcal{L}$.

Now, for a batch $B_j, j \geq 1$, attempting to elect leader, there are three cases to consider. If the robots of $B_{j+1}$ are asymmetric with respect to $\mathcal{L}$ (Case 1), the one of $r^j_1$ and $r^j_2$ which is in the dominant half will change its light to \texttt{switch off} and start executing \textsc{BecomeLeader()} (described in the following section). If the robots of $B_{j+1}$ are symmetric with respect to $\mathcal{L}$ and $\mathcal{L}$ passes through the center of a robot $r'$ of $B_{j+1}$ (Case 2), then $r^j_1$ and $r^j_2$ will change their lights to \texttt{symmetry}. When $r'$ finds two robots on its left batch with light \texttt{symmetry} that are equidistant from it, it will change its light to \texttt{switch off} and start executing \textsc{BecomeLeader()}. If the robots of $B_{j+1}$ are symmetric with respect to $\mathcal{L}$ and $\mathcal{L}$ does not pass through the center of any robot of $B_{j+1}$ (Case 3), then $r^j_1$ and $r^j_2$ will change their lights to \texttt{failed}. Then the robots of $B_{j+1}$ execute movements as described in the previous section.

\subsubsection{Executing \textsc{BecomeLeader()}}

 When a robot finds itself eligible to become leader, it sets its light to \texttt{switch off} and executes \textsc{BecomeLeader()} in order to fulfill all the additional conditions of a leader configuration (See Fig. \ref{Fig: become2}). A robot with light \texttt{switch off} will not do anything if it sees any robot with light other than \texttt{off} in its own batch or an adjacent batch, i.e., it will wait for those robots to turn their lights to \texttt{off} (See line \ref{code: imp} of Algorithm \ref{algo:phase2}). A robot $r$ that finds itself eligible to become leader, is either (Case 1) a terminal robot of a batch, or (Case 2) a middle robot of a batch. The first objective is to move vertically so that all robots are in $\mathcal{H} \in \{\mathcal{H}_{U}^{O}(r), \mathcal{H}_{B}^{O}(r)\}$ and at least 2 units away from $\mathcal{L}_H(r)$. In case 1, the robot has no obstruction to move vertically. But in case 2, it will have to move horizontally left first. We can show (See the proof of Theorem \ref{thm_phase2}) that it will have enough room to move and place itself at a position where there is no obstruction to move vertically. After the vertical movement, it will have to move horizontally so that all other robots are in $\mathcal{H}_{R}^{O}(r)$. But it will not try to do this in one go, as we have to also ensure that all other robots turn their lights to \texttt{off}. It will first move left to align itself with its nearest left batch, say $B_j$. From there it can see all robots of $B_{j-1}$ and $B_{j+1}$, and it will wait until all robots of $B_{j-1}$ and $B_{j+1}$ turn their lights to \texttt{off}. Then it will move to align itself with $B_{j-1}$ and so on. Eventually all the conditions of a leader configuration will be satisfied and it will change its light to \texttt{leader}.

 \clearpage


  \begin{algorithm}[H]
    \setstretch{1}
    \SetKwInOut{Input}{Input}
    \SetKwInOut{Output}{Output}
    \SetKwProg{Fn}{Function}{}{}
    \SetKwProg{Pr}{Procedure}{}{}

    \Pr{\textsc{Phase2()}}{

    $r \leftarrow$ myself
    
    \uIf{$r.light \neq$ \texttt{switch off}}{
    
    \uIf{there is a robot with light \texttt{switch off} in my batch or an adjacent batch \label{code: imp}}{$r.light \leftarrow$ \texttt{off}}
    
    \uElseIf{$r.light =$ \texttt{off} or \texttt{interior}}{

	\uIf{both terminal robots of my left batch have lights set to \texttt{failed} and the non-terminal robots (if any) have lights set to \texttt{off}}{$r.light \leftarrow$ \texttt{ready}}
	
	\ElseIf{both terminal robots of my left batch have lights set to \texttt{symmetry} and the non-terminal robots (if any) have lights set to \texttt{off}}{
	
	  \If{the two terminal robots of my left batch are equidistant from me}{
	
	    $r.light \leftarrow$ \texttt{switch off}
	
	}

	}

      }

    \uElseIf{$r.light =$ \texttt{terminal}}{
    
	\textsc{ElectLeader()}
 
      }

      \uElseIf{$r.light =$ \texttt{failed}}{
    
	\If{all robots of my right batch have their lights set to \texttt{ready}}{$r.light \leftarrow$ \texttt{move}}
 
      }

      \ElseIf{$r.light =$ \texttt{ready}}{
      
	\uIf{there is a robot $r'$ in $\mathcal{H}^{O}_{L}(r)$ with light set to \texttt{ready}}{
	
	  $d \leftarrow$ the horizontal distance of $r'$ from me
	
	  Move $d$ units towards left
	
	}
    
	\ElseIf{both terminal robots of my left batch have lights set to \texttt{move}}{
	
	  $d \leftarrow$ the horizontal distance of my left batch from me
	  
	  \uIf{$d > 1 + \frac{1}{n}$}{Move $d - 1 - \frac{1}{n}$ units towards left}
	  \ElseIf{$d = 1 + \frac{1}{n}$}{
	  
	    \If{there is no robot with light \texttt{ready} in $\mathcal{H}^{O}_{R}(r)$}{
	    
	      \uIf{I am terminal}{\textsc{ElectLeader()}}
	      \Else{$r.light \leftarrow$ \texttt{off}}
	    
	    }
	  
	  }
	
	}

      }

    }
    
    \Else{\textsc{BecomeLeader()}}
    
    }
    
     \Pr{\textsc{ElectLeader()}}{
     
	\uIf{I am in the first batch}{
	
	$\mathcal{L} \leftarrow$ the horizontal line passing through the mid-point of the line segment joining me and the other robot (with light \texttt{terminal}) on $L_V(r)$
	
	}
	\Else{
	$\mathcal{L} \leftarrow$ the horizontal line passing through the mid-point of the line segment joining the terminal robots (with lights \texttt{move}) of my left batch
	}
	\uIf{my right batch is symmetric with respect to $\mathcal{L}$}{
	
	  \uIf{$\mathcal{L}$ passes through the center of a robot of the right batch}{$r.light \leftarrow$ \texttt{symmetry}}
	  \Else{$r.light \leftarrow$ \texttt{failed}}
	
	}
	
	\ElseIf{I am in the dominant half with respect to $\mathcal{L}$ and my right batch}{$r.light \leftarrow$ \texttt{switch off}}

     }
    
\caption{Phase2}
    \label{algo:phase2} 
\end{algorithm}

\subsection{Pattern Formation from Leader Configuration}

In a leader configuration, the robots can reach an agreement on a common coordinate system. All non-leader robots in a leader configuration lie on one of the open half-planes delimited by the horizontal line passing through the leader $r_l$. This half-plane will  correspond to the positive direction of $Y$-axis or `up'. Therefore, we have an agreement on `up', `down', `left' and `right'. Now the origin will be fixed at a point such that the coordinates of $r_l$ are $(0,-2)$. Now the given pattern can be embedded on the plane with respect to the common coordinate system. Let us call these points the \emph{target points}. Order these points as $t_0, t_1, \ldots, t_{n-1}$ from top to bottom, and from right to left in case multiple robots on the same horizontal line (See Fig. \ref{fig: stage2_robot_target}). Order the robots as $r_l = r_0, r_1, \ldots, r_{n-1}$ from bottom to up, and from left to right in case multiple robots on the same horizontal line (See Fig. \ref{fig: stage2_robot_order}). The non-leader robots will move sequentially according to this order and place themselves on $\mathcal{L}_H(r_l)$. Then sequentially $r_1, \ldots, r_{n-1}$ will move to the target points $t_0, \ldots, t_{n-2}$, and finally $r_0$ will move to $t_{n-1}$. Pseudocode of the algorithm is given in Algorithm \ref{main_algorithm: stage 2} and further details are in Appendix \ref{appendix_s2}.

\begin{minipage}{1\textwidth}
  \begin{algorithm}[H]
    \setstretch{.1}
    \SetKwInOut{Input}{Input}
    \SetKwInOut{Output}{Output}
    \SetKwProg{Fn}{Function}{}{}
    \SetKwProg{Pr}{Procedure}{}{}

  \Input{The configuration of robots visible to me.}
    
    \Pr{\textsc{PatternFormationFromLeaderConfiguration()}}{

    $r \leftarrow$ myself
    
    $r_l \leftarrow$ the robot with light \texttt{leader}
    
    \uIf{$r.light =$ \texttt{off}}{
    
      \uIf{($r_l \in \mathcal{H}_B^O(r)$)  and (there is no robot in $\mathcal{H}_{B}^{O}(r) \cap \mathcal{H}_{U}^{O}(r_l)$) and ($r$ is leftmost on $\mathcal{L}_H(r)$) \label{code s2: 0}}{
    
	  \uIf{there are no robots on $\mathcal{L}_H(r_l)$ other than $r_l$ \label{code s2: 1}}{
	  
	    \uIf{there is a robot with light \texttt{done}\label{code s2: 2}}{
	    
	    \uIf{I am at $t_{n-2}$}{$r.light \leftarrow$ \texttt{done}}
	    \Else{Move to $t_{n-2}$}
	    
	    }
	    \Else{Move to $(1,-2)$}
	  
	  }
	  \uElseIf{there are $i$ robots on $\mathcal{L}_H(r_l)$ other than $r_l$ at $(1,-2), \ldots, (i,-2)$\label{code s2: 3}}{Move to $(i+1,-2)$}
	  \ElseIf{there are $i$ robots on $\mathcal{L}_H(r_l)$ other than $r_l$ at $(n-i,-2), \ldots, (n-1,-2)$\label{code s2: 6}}{
	  
	    \uIf{I am at $t_{n-i-2}$}{$r.light \leftarrow$ \texttt{done}}
	    \Else{Move to $t_{n-i-2}$}
	  
	  }
    
      }
      
      \ElseIf{$r_l \in \mathcal{L}_H(r)$ and $\mathcal{H}_U^O(r)$ has no robots with light \texttt{off}\label{code s2: 4}}{
      
	\If{I am at $(i,-2)$}{Move to $t_{i-1}$}
      
      }

    }

    \ElseIf{$r.light =$ \texttt{leader}}{
    
      \If{there are no robots with light \texttt{off} \label{code s2: 5}}{
    
    \uIf{I am at $t_{n-1}$}{$r.light \leftarrow$ \texttt{done}}
	    \Else{Move to $t_{n-1}$}}
    
      }
    }

\caption{Pattern Formation from Leader Configuration}
    \label{main_algorithm: stage 2} 
\end{algorithm}

\end{minipage}

\section{Conclusion}

Using 4 extra colors, we have extended the results of \cite{BoseKAS19} to the more realistic setting of fat robots with non-rigid movements and also improved the move complexity to $\Theta(n)$, which is asymptotically optimal. Techniques used in Phase 2 of our algorithm can be used to solve \textsc{Leader Election} without movement for luminous opaque point robots for any initial configuration where \textsc{Leader Election} is solvable in full visibility model, except for the configuration where all robots are collinear. An interesting question is whether there is a no movement \textsc{Leader Election} algorithm for (luminous and opaque) fat robots. Another open question is whether it is possible to solve $\mathcal{APF}$ for opaque (point or fat) robots with only agreement in chirality.

\bibliographystyle{plain}
\bibliography{fat_pattern}

\begin{thebibliography}{10}

\bibitem{agathangelou2013distributed}
Chrysovalandis Agathangelou, Chryssis Georgiou, and Marios Mavronicolas.
\newblock A distributed algorithm for gathering many fat mobile robots in the
  plane.
\newblock In {\em Proceedings of the 2013 ACM symposium on Principles of
  distributed computing}, pages 250--259. ACM, 2013.

\bibitem{bose18}
Kaustav Bose, Ranendu Adhikary, Manash~Kumar Kundu, and Buddhadeb Sau.
\newblock Arbitrary pattern formation on infinite grid by asynchronous
  oblivious robots.
\newblock In {\em 13th International Conference and Workshops on Algorithms and
  Computation, {WALCOM} 2019, Guwahati, India, February 27-March 2, 2019,
  Proceedings}, pages 354--366, 2018.
\newblock doi: \url{10.1007/978-3-030-10564-8_28}.

\bibitem{BoseKAS19}
Kaustav Bose, Manash~Kumar Kundu, Ranendu Adhikary, and Buddhadeb Sau.
\newblock Arbitrary pattern formation by asynchronous opaque robots with
  lights.
\newblock In {\em Structural Information and Communication Complexity - 26th
  International Colloquium, {SIROCCO} 2019, L'Aquila, Italy, July 1-4, 2019,
  Proceedings}, pages 109--123, 2019.
\newblock doi: \url{10.1007/978-3-030-24922-9\_8}.

\bibitem{Cicerone17}
Serafino Cicerone, Gabriele Di~Stefano, and Alfredo Navarra.
\newblock Asynchronous arbitrary pattern formation: the effects of a rigorous
  approach.
\newblock {\em Distributed Computing}, pages 1--42, 2018.
\newblock doi: \url{10.1007/s00446-018-0325-7}.

\bibitem{czyzowicz2009gathering}
Jurek Czyzowicz, Leszek Gasieniec, and Andrzej Pelc.
\newblock Gathering few fat mobile robots in the plane.
\newblock {\em Theoretical Computer Science}, 410(6-7):481--499, 2009.
\newblock doi: \url{10.1007/11945529_25}.

\bibitem{Dieudonne10}
Yoann Dieudonn{\'{e}}, Franck Petit, and Vincent Villain.
\newblock Leader election problem versus pattern formation problem.
\newblock In {\em Distributed Computing, 24th International Symposium, {DISC}
  2010, Cambridge, MA, USA, September 13-15, 2010. Proceedings}, pages
  267--281, 2010.
\newblock doi: \url{10.1007/978-3-642-15763-9\_26}.

\bibitem{Flocchini08}
Paola Flocchini, Giuseppe Prencipe, Nicola Santoro, and Peter Widmayer.
\newblock Arbitrary pattern formation by asynchronous, anonymous, oblivious
  robots.
\newblock {\em Theor. Comput. Sci.}, 407(1-3):412--447, 2008.
\newblock doi: \url{10.1016/j.tcs.2008.07.026}.

\bibitem{peleg2005distributed}
David Peleg.
\newblock Distributed coordination algorithms for mobile robot swarms: New
  directions and challenges.
\newblock In {\em International Workshop on Distributed Computing}, pages
  1--12. Springer, 2005.

\bibitem{Yamashita99}
Ichiro Suzuki and Masafumi Yamashita.
\newblock Distributed anonymous mobile robots: Formation of geometric patterns.
\newblock {\em {SIAM} J. Comput.}, 28(4):1347--1363, 1999.
\newblock doi: \url{10.1137/S009753979628292X}.

\bibitem{VaidyanathanST18}
Ramachandran Vaidyanathan, Gokarna Sharma, and Jerry~L. Trahan.
\newblock On fast pattern formation by autonomous robots.
\newblock In {\em Stabilization, Safety, and Security of Distributed Systems -
  20th International Symposium, {SSS} 2018, Tokyo, Japan, November 4-7, 2018,
  Proceedings}, pages 203--220, 2018.
\newblock doi: \url{10.1007/978-3-030-03232-6\_14}.

\bibitem{Yamashita10}
Masafumi Yamashita and Ichiro Suzuki.
\newblock Characterizing geometric patterns formable by oblivious anonymous
  mobile robots.
\newblock {\em Theor. Comput. Sci.}, 411(26-28):2433--2453, 2010.
\newblock doi: \url{10.1016/j.tcs.2010.01.037}.

\end{thebibliography}

\appendix
\newpage

 \section{Correctness of Stage 1}\label{appendix_s1}
 
 For the correctness proofs, we shall use the notion of a \emph{stable configuration} from \cite{BoseKAS19}. Denote the position of a robot $r$ at time $t$ to be $r(t)$. Suppose that a robot $r$ at  $p$ takes a snapshot at time $t_1$. Based on this snapshot, suppose that it decides to change its light (Case 1) or move to a different point (Case 2) or both (Case 3). In case 1, assume that it changes its light at time $t_2 > t_1$. In case 2, assume that it starts moving at time $t_3 > t_1$. When we say that it starts moving at $t_3$, we shall mean that $r(t_3) = p$, but $r(t_3 + \epsilon) \neq p$ for sufficiently small $\epsilon > 0$. For case 3, assume that $r$ changes its light at  $t_2 > t_1$ and starts moving at  $t_3 > t_2$. Then we say that $r$ has a \emph{pending move} at $t$ if $t \in (t_1, t_2)$ in case 1 or $t \in (t_1, t_3]$ in case 2 and 3. A robot $r$ is said to be \emph{stable} at time $t$, if $r$ is stationary and has no pending move at $t$. A configuration at time $t$ is said to be a \emph{stable configuration} if every robot is stable at $t$. Also, a configuration at time $t$ is said to be a \emph{final configuration} if 
\begin{enumerate}                                                                                                                                                                                                                                                                                                \item every robot at $t$ is stable,                                                                                                                                                                                                                                                                                                                                                                                                                                                                                                                                                                                   \item any robot taking a snapshot at $t$ will not decide to move or change its color.                                                                                                                                                                                                                                                                                                 \end{enumerate}

The aim of the leader election stage is to form a \emph{leader configuration}. Formally, we define a leader configuration to be a stable configuration in which there is a unique robot $r_l$ such that
\begin{enumerate}
   \item $r_l.light =$ \texttt{leader},
   
   \item $r.light =$ \texttt{off} for all $r \in$ $\mathcal{R} \setminus \{r_l\}$,
   
   \item there is $\mathcal{H} \in \{\mathcal{H}_{U}^{O}(r_l), \mathcal{H}_{B}^{O}(r_l)\}$ such that $r \in \mathcal{H}_{R}^{O}(r_l) \cap \mathcal{H}$ for all $r \in$ $\mathcal{R} \setminus \{r_l\}$,
   
   \item distance of any robot of $\mathcal{R} \setminus \{r_l\}$ from $\mathcal{L}_H(r_l)$ is at least 2 units.
\end{enumerate}
 
 In this section, we shall prove Theorem \ref{thm_s1}. It will follow from Theorem \ref{thm_phase11}, Theorem \ref{thm_phase12}, Theorem \ref{thm_phase13} and Theorem \ref{thm_phase2}, whose proofs are given in the following subsections.

 \begin{theorem}\label{thm_s1}
 For any initial configuration $\mathbb{C}(0)$ which is not symmetric with respect to a line $\mathcal{K}$ such that 1) $\mathcal{K}$ is parallel to the $X$-axis and 2) $\mathcal{K}$ is not passing through the center of any robot,  $\exists$ $T_1 > 0$ such that $\mathbb{C}(T_1)$ is a leader configuration.
 \end{theorem}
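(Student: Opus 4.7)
The plan is to derive Theorem \ref{thm_s1} by composing the four sub-theorems \ref{thm_phase11}, \ref{thm_phase12}, \ref{thm_phase13}, \ref{thm_phase2} in a case analysis on the leftmost batch $B_1$ of the initial configuration $\mathbb{C}(0)$. By inspection of Algorithm \ref{main_algorithm}, at time $0$ no robot has any of the colors \texttt{failed}, \texttt{symmetry}, \texttt{ready}, \texttt{move}, \texttt{switch off}, and the geometric condition on the first batch for Phase 2 is not yet met, so every active robot enters \textsc{Phase1()}. Hence, until Phase 1 establishes the Phase 2 entry conditions, the execution is governed entirely by Algorithm \ref{algo:phase1}.

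First I would treat the case $|B_1| = 1$. The unique leftmost robot recognizes itself from its local view (no robot lies in $\mathcal{H}^{O}_{L}(r)$ or on $\mathcal{L}_V(r)$ except itself) and invokes \textsc{BecomeLeader()}. By Theorem \ref{thm_phase11}, after finite time this procedure produces a leader configuration as defined at the start of Appendix \ref{appendix_s1}. Next I would treat the case $|B_1| \geq 2$. Here the top and bottom robots of $B_1$ turn their lights to \texttt{terminal}, interior robots of $B_1$ turn to \texttt{interior}, and the two \texttt{terminal} robots slide leftward in the coordinated manner prescribed by Algorithm \ref{algo:phase1} (first aligning into a two-robot batch, then achieving a separation of at least $\frac{n+3}{2}$ from $B_2$). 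By Theorems \ref{thm_phase12} and \ref{thm_phase13}, after finite time we reach a stable configuration in which $B_1$ consists of exactly two robots with light \texttt{terminal}, lying at horizontal distance $\geq \frac{n+3}{2}$ from $B_2$, with all robots of $B_2$ sharing a single color (\texttt{interior} or \texttt{off}) and all other robots with light \texttt{off}. The assumed absence of a horizontal line of symmetry avoiding robot centers in $\mathbb{C}(0)$ is not destroyed by these moves, since the two terminal movements are mirror-symmetric about $\mathcal{L}$ and therefore cannot create a new symmetry that was not already present.

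From this configuration, the Phase 2 guard in Algorithm \ref{main_algorithm} now fires, and by Theorem \ref{thm_phase2} a leader configuration is produced in finite additional time. Setting $T_1$ to be this latter moment in the first case, or the moment at which \textsc{BecomeLeader()} terminates in the $|B_1|=1$ case, yields the required finite $T_1$ with $\mathbb{C}(T_1)$ a leader configuration.

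The main obstacle I anticipate is not the composition itself but justifying that the phase-routing logic of Algorithm \ref{main_algorithm} is never spoofed by transient snapshots. Because the scheduler is fully asynchronous and movements are non-rigid, an active robot may observe the configuration while others are mid-move or between a \textsc{Compute} and its corresponding \textsc{Move}. I must verify that the conjunction of (i) the geometric condition ``$B_1$ has two \texttt{terminal} robots and distance $\geq \frac{n+3}{2}$ from $B_2$'' and (ii) the presence of any Phase 2 marker light indeed partitions all reachable intermediate configurations cleanly into a Phase 1 regime and a Phase 2 regime; otherwise a robot could erroneously start \textsc{Phase2()} while some batchmate is still mid-Phase-1, and the correctness of the sub-theorems would not chain. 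The argument amounts to showing that once the Phase 2 guard becomes true it stays true, and that before it becomes true no Phase 2 marker color has ever been written---this is what ultimately allows \ref{thm_phase11}--\ref{thm_phase13} to hand off safely to \ref{thm_phase2}.
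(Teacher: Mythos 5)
Your proposal is correct and follows essentially the same route as the paper: Theorem \ref{thm_s1} is obtained by a case analysis on the size of the first batch, invoking Theorem \ref{thm_phase11} when $|B_1|=1$ and chaining Theorem \ref{thm_phase12} or Theorem \ref{thm_phase13} into Theorem \ref{thm_phase2} when $|B_1|\geq 2$. Your added concern about the phase-routing guard in Algorithm \ref{main_algorithm} being spoofed by transient asynchronous snapshots is legitimate and is only implicitly addressed inside the sub-theorem proofs, but it does not change the structure of the argument.
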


 \subsection{Correctness of Phase 1}\label{appendix_p1}

 \begin{theorem}\label{thm_phase11}
 If the first batch of $\mathbb{C}(0)$ has exactly one robot, then $\exists$ $T_1 > 0$ such that $\mathbb{C}(T_1)$ is a leader configuration. 
 \end{theorem}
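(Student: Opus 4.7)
The plan is to show that in Phase 1, when the first batch is a singleton, the unique leftmost robot $r$ is the only active robot and that its invocation of \textsc{BecomeLeader()} drives the configuration to a leader configuration in finite time.

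First, I would verify by inspection of Algorithm \ref{algo:phase1} that no robot other than $r$ ever changes its light or moves during Phase 1. A robot with light \texttt{off} acts only if it lies in the first batch or sees a robot with light \texttt{interior} on its vertical line; by hypothesis the first batch contains only $r$, and no \texttt{interior} light is ever set (by induction, since only $r$ is active), so both conditions fail for every robot other than $r$. The transitions associated with \texttt{terminal} are similarly never triggered. Using the agreement on the direction of the $X$-axis, $r$ correctly recognizes its unique-leftmost status in each LCM cycle and invokes \textsc{BecomeLeader()}.

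The core of the argument is the correctness and termination of \textsc{BecomeLeader()} in this setting. Since all other robots are stationary and $r$'s $X$-coordinate is unchanged during vertical motion, $r$ remains uniquely leftmost throughout, so all other robots lie in $\mathcal{H}_R^O(r)$ at all times. By the description in Section 3.3.3, $r$ falls in Case 1 (it is terminal in a singleton batch and hence has no horizontal obstruction), so the procedure only needs to displace $r$ vertically until every other robot lies in one of $\mathcal{H}_U^O(r), \mathcal{H}_B^O(r)$ at distance at least 2 from $\mathcal{L}_H(r)$. The direction of travel is determined from the local view: if the other robots are asymmetric with respect to $\mathcal{L}_H(r)$, $r$ moves toward the appropriate side; if they are symmetric about $\mathcal{L}_H(r)$ in $r$'s view, either direction is equivalent. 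Once the geometric target is attained, $r$ sets its light to \texttt{leader}, and the resulting configuration satisfies all four conditions of a leader configuration.

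Termination follows from the non-rigid movement bound: each LCM cycle advances $r$ by at least $\delta$ toward its goal unless the goal is within $\delta$, so $r$ reaches its target in finitely many cycles. The main obstacle will be carefully handling the interplay between asynchrony, pending moves, and the repeated \textsc{Look} steps of \textsc{BecomeLeader()}: one must check that across successive snapshots $r$ consistently recomputes the same direction and destination, and that $r$ does not prematurely adopt the \texttt{leader} color before satisfying all leader-configuration conditions. Because only $r$ is active, every snapshot differs from its predecessor only in $r$'s own position, and the consistency verification reduces to routine geometric bookkeeping on the fixed positions of the remaining $n-1$ robots.
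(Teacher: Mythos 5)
Your overall structure matches the paper's proof (only $r$ is ever active; $r$ recognizes itself as uniquely leftmost and runs \textsc{BecomeLeader()}; termination follows from the $\delta$ bound), but there is one genuine gap: you assert that because $r$ is terminal in a singleton batch it ``has no horizontal obstruction, so the procedure only needs to displace $r$ vertically.'' That claim imports the Case~1/Case~2 dichotomy of Section~3.3.3, which is stated for Phase~2 where consecutive batches have already been separated by at least $1+\frac{1}{n}$ units. In the initial configuration $\mathbb{C}(0)$ no such separation is guaranteed: the robots are disks of diameter $1$, so a robot whose center lies strictly to the right of $\mathcal{L}_V(r)$ but at horizontal distance less than $1$ still intersects the vertical strip of width $1$ that $r$ sweeps when it translates along $\mathcal{L}_V(r)$. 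For example, two robots tangent to $r$ at centers $(x_r+0.6,\,y_r\pm 0.8)$ block both the upward and the downward motion of $r$, even though $r$ is the unique leftmost robot. The paper's proof explicitly handles this as its Case~3: when both vertical directions are obstructed, $r$ must first move left to clear the strip before moving vertically. Your argument, as written, has no provision for this and would stall exactly in those configurations.

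A secondary, smaller point: your tie-breaking discussion (``if the other robots are asymmetric with respect to $\mathcal{L}_H(r)$, $r$ moves toward the appropriate side'') is more elaborate than needed and not what the algorithm does; the paper simply has $r$ move ``down'' in its own local frame whenever both directions are free, and in the obstructed cases the choice is forced or preceded by the leftward detour. Once you add the leftward-detour case and note that the detour preserves $r$'s status as unique leftmost (it only decreases $r$'s $X$-coordinate), the rest of your argument --- stability of the other $n-1$ robots, consistency of successive snapshots, and finitely many cycles by the $\delta$ bound --- goes through and coincides with the paper's reasoning.
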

 
 \begin{proof}
  Let $r$ be the robot in the first batch. From its local view, $r$ will find that it is the unique leftmost in the configuration and start executing \textsc{BecomeLeader()}, i.e., it will move vertically until all other robots are in $\mathcal{H}_{R}^{O}(r) \cap \mathcal{H}$, where $\mathcal{H} \in \{\mathcal{H}_{U}^{O}(r), \mathcal{H}_{B}^{O}(r)\}$ and the distance of any robot of $\mathcal{R} \setminus \{r\}$ from $\mathcal{L}_H(r)$ is at least 2 units. 
  
  \textbf{Case 1} If there is no obstruction to move vertically in both directions (`up' and `down'), it will move `down' according to its local coordinate system. 
  
  \textbf{Case 2} If there is exactly one direction where there is no obstruction to move vertically, then it will move in that direction accordingly. 
  
  \textbf{Case 3} If there is obstruction in both directions, then it will have to first move left (See Fig. \ref{Fig: proof0}). 
  
  When the required condition is achieved, $r$ will change its light to \texttt{leader} at some time $T_1 > 0$. It is easy to see that all other robots will remain stable in $[0, T_1]$. So, $\mathbb{C}(T_1)$ is the required leader configuration.  \qed
 \end{proof}

 \begin{figure}[h!]
\centering
\subcaptionbox[Short Subcaption]{
       \label{}
}
[
    0.48\textwidth 
]
{
    \fontsize{8pt}{8pt}\selectfont
    \def\svgwidth{0.48\textwidth}
    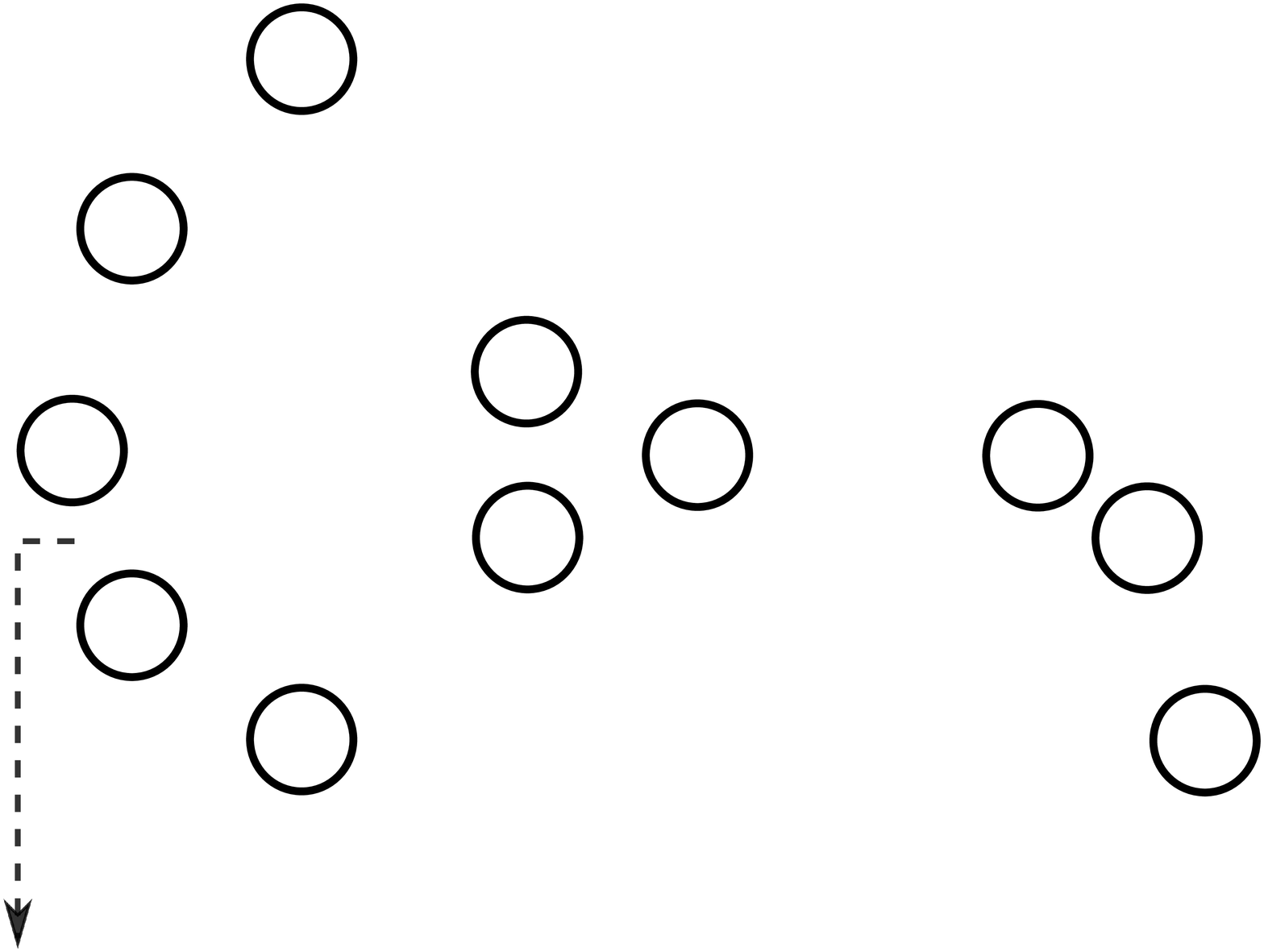
}
\hfill
\subcaptionbox[Short Subcaption]{
     \label{}
}
[
    0.48\textwidth 
]
{
    \fontsize{8pt}{8pt}\selectfont
    \def\svgwidth{0.48\textwidth}
    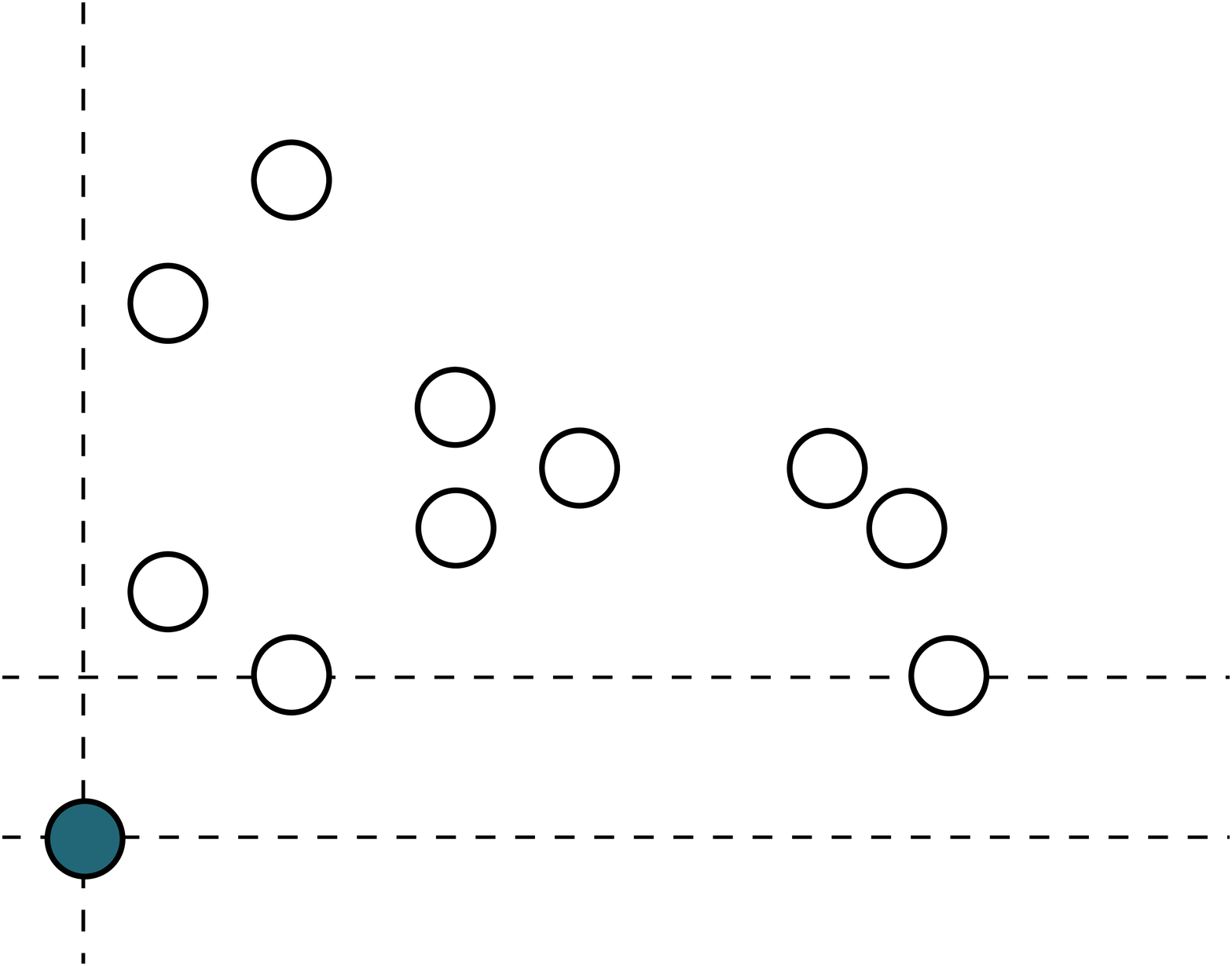
}
\caption[Short Caption]{a) Execution of \textsc{BecomeLeader()} by the unique leftmost robot $r$. b) The leader configuration after $r$ changes its light to \texttt{leader}.}
\label{Fig: proof0}
\end{figure}

 \begin{theorem}\label{thm_phase12}
 If the first batch of $\mathbb{C}(0)$ has exactly two robots, then $\exists$ $T' > 0$ such that $\mathbb{C}(T')$ is a stable configuration where 
 \begin{enumerate}
    
   \item the first batch has exactly two robots with lights set to \texttt{terminal},
   
   \item the robots of all other batches have their lights set to \texttt{off},
   
   \item the distance between the first and the second batch is at least $\frac{n+3}{2}$ units.
\end{enumerate}
 
\end{theorem}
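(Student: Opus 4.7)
The plan is to track the very limited light transitions and leftward motions permitted by Algorithm~\ref{algo:phase1} under the asynchronous non-rigid scheduler. Let $r_1,r_2$ denote the two robots of the first batch of $\mathbb{C}(0)$ (both necessarily terminal), let $d_0$ be the initial horizontal distance between the central axis of the first batch and the central axis of the second batch $B_2$, and let $T_0$ denote the earliest time at which both $r_1$ and $r_2$ have light \texttt{terminal}.

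First I would rule out any activity outside $\{r_1,r_2\}$. An \texttt{interior} light can be assigned only by the ``not terminal'' sub-branch of the first-batch initialisation (never triggered since $r_1,r_2$ are both terminal) or by line~\ref{code: p1_0} (which requires a pre-existing \texttt{interior} robot on the same vertical line); a routine induction on activation events rules both out. The same observation rules out any \texttt{terminal} assignment outside the first batch, because the second \texttt{terminal} assignment is nested inside the \texttt{interior} guard. Consequently every robot $r\notin\{r_1,r_2\}$ activated during Phase~1 falls through all branches of Algorithm~\ref{algo:phase1} without changing state, and stays at its initial position with light \texttt{off}. Meanwhile fairness guarantees that $r_1$ and $r_2$ are each activated at least once with light \texttt{off}; at these activations each finds itself in the first batch, not alone and terminal, and switches its light to \texttt{terminal}. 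Up to and including $T_0$, $r_1,r_2$ are still vertically aligned at distance $d_0$ from $B_2$.

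If $d_0 \geq \tfrac{n+3}{2}$, the \texttt{terminal} logic is a no-op and $\mathbb{C}(T_0)$ is already stable and satisfies the three conditions, so $T'=T_0$. Otherwise I would prove, by induction over activation events from $T_0$ onwards, the invariants: (I) $r_1,r_2$ are the only \texttt{terminal} robots; (II) $\max(d_1(t),d_2(t)) \leq \tfrac{n+3}{2}$, where $d_i(t)$ is the current horizontal distance from $r_i$ to $B_2$; (III) any pending move, when executed, respects (II). These are verified by case analysis on the two movement branches of the \texttt{terminal} logic. In the ``aligned'' branch a robot currently at distance $d$ commits to moving $\tfrac{n+3}{2}-d$ left, so by non-rigidity it lands at distance in $[d,\tfrac{n+3}{2}]$. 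In the ``lagging'' branch (line~\ref{code: p1_1}) the commitment is exactly the horizontal gap to the other terminal, which by (II) keeps the robot at distance at most $\tfrac{n+3}{2}$. Stale pending moves are harmless because a robot does not move between its own Look and Move, so the committed amount, computed from a configuration already satisfying (II), remains safe when finally executed.

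For termination I would track the pair $P(t)=\max(d_1(t),d_2(t))$ and $Q(t)=|d_1(t)-d_2(t)|$. By monotonicity of leftward motion $P$ is non-decreasing and, by (II), bounded above by $\tfrac{n+3}{2}$. Every aligned execution strictly increases $P$ by at least $\min(\delta,\tfrac{n+3}{2}-P)$ (so it either makes $\delta$-progress or completes in one step), and every lagging execution strictly decreases $Q$ by at least $\min(\delta,Q)$ without changing $P$. Hence only finitely many moves occur. After the last move, $r_1$ and $r_2$ must be vertically aligned at distance exactly $\tfrac{n+3}{2}$ from $B_2$---otherwise one of the \texttt{terminal} branches would still be enabled---and by the first paragraph all other robots are at their initial positions with light \texttt{off}, so the configuration at this time $T'$ is stable and satisfies the three stated conditions. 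The main difficulty, encapsulated by invariant (III), is the bookkeeping of stale pending moves: one has to verify on each Look-to-Move interval of $r_1$ or $r_2$ that a commitment computed from an obsolete view cannot push either robot past $\tfrac{n+3}{2}$ or permanently destroy the vertical realignment of the two terminals.
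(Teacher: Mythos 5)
Your proposal is correct and follows essentially the same route as the paper's proof: a safety invariant that neither terminal robot ever commits to a destination beyond its final point at distance $\frac{n+3}{2}$ from the second batch, a liveness argument that a stationary robot short of that point will eventually decide to move again, and the observation that all other robots remain stable throughout; your explicit variant functions $P,Q$ with $\delta$-progress merely make the paper's informal ``gradually moves towards $P_i$'' termination claim precise. The one detail you gloss over is that in the aligned branch the algorithm's $d$ is the distance to the leftmost \emph{visible} robot on the right, which by opacity may belong to a batch beyond $B_2$ (Fig.~\ref{view}); the committed move is then $\frac{n+3}{2}-d_1$ with $d_1\geq d$, which only shortens the move and hence strengthens your invariant (II), so the argument is unaffected.
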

  
  \begin{proof}
   
   Let $r_1$ and $r_2$ be the two robots of the first batch of $\mathbb{C}(0)$. After some time, both of them will change their lights to \texttt{terminal}. Notice that the robots do not move until both of them change their lights. If the distance $d$ between the first and the second batch is already at least $\frac{n+3}{2}$ in $\mathbb{C}(0)$, then the robots do not need to move and we are done. So let $d < \frac{n+3}{2}$. Then the robots $r_1$ and $r_2$ will have to move $\frac{n+3}{2} - d$ units to the left, to the points $P_1$ and $P_2$ respectively, so that their distance from the second batch becomes exactly $\frac{n+3}{2}$. We will show that the algorithm will successfully bring the robots stationary at the points $P_1$ and $P_2$.

   Suppose that the robots take snapshots at times $t_1 \leq t_2 \leq \ldots$. We shall show that if $r_i  \in \{r_1,r_2\}$ takes a snapshot at $t_k$, then it will not decide (instructed by Algorithm \ref{algo:phase1}) to move beyond $P_i$. Without loss of generality, assume that $r_1$ takes snapshot at $t_1$. Notice that if the second batch is too  close to the first batch, $r_1$ and $r_2$ may not be able to correctly identify the second batch (See Fig. \ref{view}). So $r_1$ will decide to move $\frac{n+3}{2} - d_1$ units left, where $d_1$ is its horizontal distance from the leftmost robot (that it can see) on its right. Notice that $d_1 \geq d$ (where $d$ is its distance from the actual second batch), and hence, $\frac{n+3}{2} - d_1 \leq \frac{n+3}{2} - d$. So $r_1$ does not decide to move beyond $P_1$. Suppose that the same is true up to the $(k-1)$th snapshot. Suppose $r_i  \in \{r_1,r_2\}$ takes the snapshot at $t_k$ and at that time, its distance from the (actual) second batch of $\mathbb{C}(0)$ is $d'$.

\begin{figure}
   \centering
   \fontsize{8pt}{8pt}\selectfont
   \def\svgwidth{0.25\textwidth}
   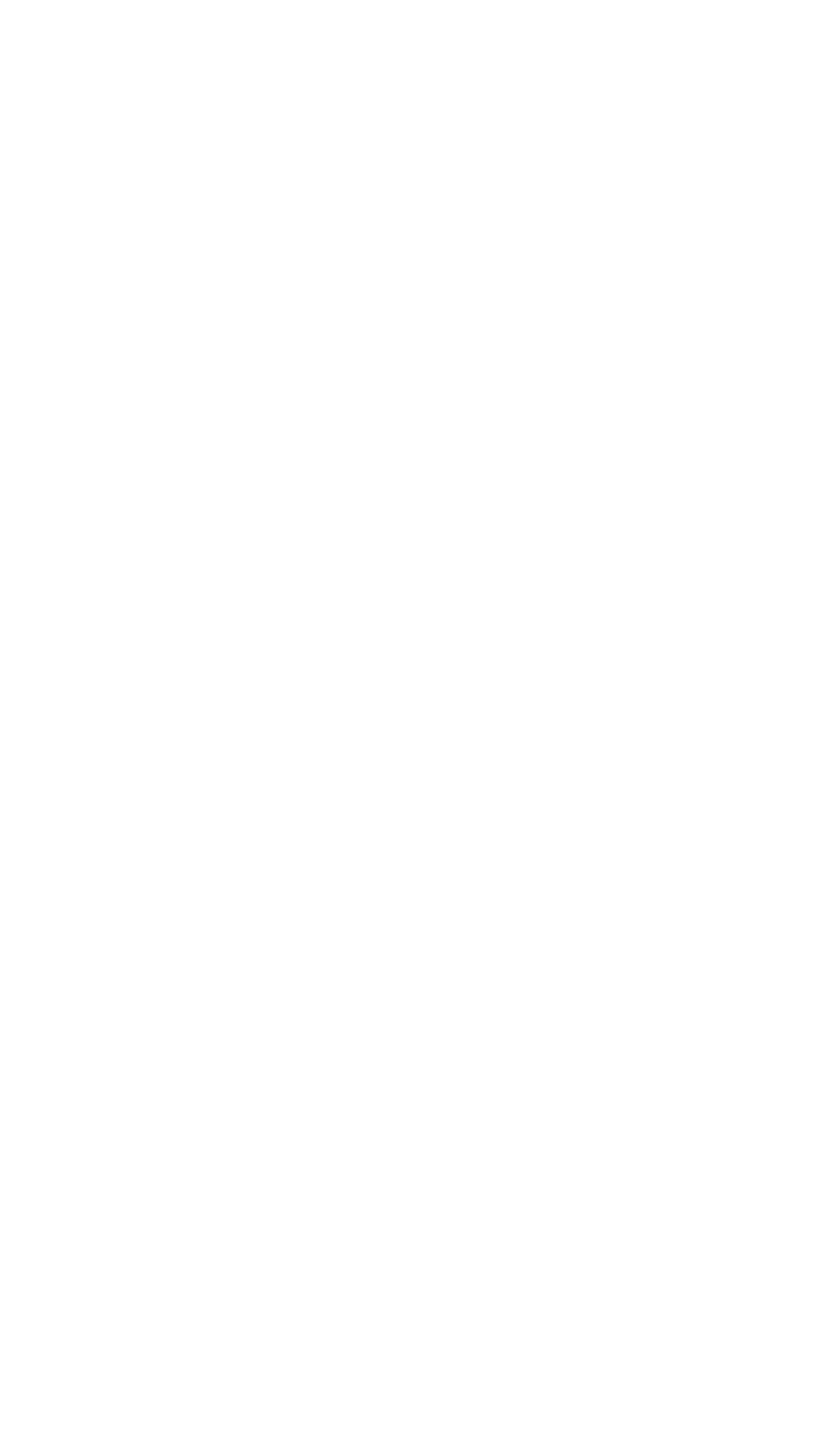
   \caption{The first three batches of a configuration, where $B_1 = \{r_1, r_2\}$, $B_2 = \{r_3\}$ and $B_3 = \{r_4, r_5\}$. Here, $r_1$ can not see $r_3$. Therefore, $r_1$ thinks that $\{r_4, r_5\}$ is the second batch.}
   \label{view}
\end{figure}

    \textbf{Case A1} The other robot $r_j$ (with light set to \texttt{terminal}) is on $\mathcal{L}_V(r_i)$. If $d' = \frac{n+3}{2}$, it will not move. Otherwise, by the same argument as earlier, $r_i$ will not decide to move beyond $P_i$.

%
    
    \textbf{Case A2} The other robot $r_j$ (with light set to \texttt{terminal}) is on $\mathcal{H}_{R}^{O}(r_i)$. In this case, $r_i$ will decide not to move.
    
    \textbf{Case A3} The other robot $r_j$ (with light set to \texttt{terminal}) is on $\mathcal{H}_{L}^{O}(r_i)$. In this case, $r_i$ will decide to move left to vertically align itself with $r_j$.  By our assumption, $r_j$ is not beyond $P_j$. So $r_i$ has not decided to move beyond $P_i$.

   Now we show that if $r_i$ at some time $t$ is stationary and its distance $d'$ from the second batch (of $\mathbb{C}(0)$) is less than $\frac{n+3}{2}$, then it will decide to move towards $P_i$ at some time after $t$. Suppose that it takes the first snapshot after $t$ at $t_k$.
   
    \textbf{Case B1} The other robot $r_j$ (with light set to \texttt{terminal}) is on $\mathcal{L}_V(r_i)$. Of course, it will decide to move left in this case.

%
    
    \textbf{Case B2} The other robot $r_j$ (with light set to \texttt{terminal}) is on $\mathcal{H}_{R}^{O}(r_i)$. In this case, $r_i$ will decide not to move. Then after finite time, $r_j$ will overtake $r_i$ or stop on $\mathcal{L}_V(r_i)$. In either case, $r_i$ will decide to move left.

    \textbf{Case B3} The other robot $r_j$ (with light set to \texttt{terminal}) is on $\mathcal{H}_{L}^{O}(r_i)$. Clearly, $r_i$ will decide to move left to vertically align itself with $r_j$.

   So, we have shown that each $r_i, i = 1, 2$ gradually moves towards $P_i$ and never decides to move beyond $P_i$. Therefore both will eventually reach $P_i$ and remain stationary. Also observe that during this process, all the other robots remain stable. Therefore, we shall obtain the required configuration at some time $T' > 0$. \qed
   
  \end{proof}

 \begin{theorem}\label{thm_phase13}
 If the first batch of $\mathbb{C}(0)$ has more than two robots, then $\exists$ $T' > 0$ such that $\mathbb{C}(T')$ is a stable configuration where 
 \begin{enumerate}
    
   \item the first batch has exactly two robots with lights set to \texttt{terminal},
   
   \item the robots of the second batch have their lights set to \texttt{interior},
   
   \item the robots of all other batches have their lights set to \texttt{off},
   
   \item the distance between the first and the second batch is exactly $\frac{n+3}{2}$ units.
\end{enumerate}
 
\end{theorem}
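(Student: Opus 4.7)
The plan is to extend the proof of Theorem \ref{thm_phase12} by handling the interior robots of the first batch of $\mathbb{C}(0)$. Let $r_t$ and $r_b$ be the top and bottom terminal robots of the first batch, let $r^1, \ldots, r^{m-2}$ be its interior robots (with $m>2$), and let $P_t$ and $P_b$ denote the points obtained by translating $r_t$ and $r_b$ respectively $\frac{n+3}{2}$ units to the left. The statement asks me to produce a stable configuration in which only $r_t, r_b$ have moved (to $P_t, P_b$), the $r^i$'s carry light \texttt{interior}, and every other robot is unchanged with light \texttt{off}.

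\textbf{Settling lights.} I first show each interior robot $r^i$ eventually acquires light \texttt{interior} and is then quiescent. When $r^i$ first wakes with light \texttt{off}, the branch ``I am in the first batch and I am not the only robot in my batch'' is enabled and, as $r^i$ is non-terminal, its light is set to \texttt{interior}. After this, Algorithm~\ref{algo:phase1} has no code path for a robot with light \texttt{interior}, so $r^i$ is stable forever. A similar but more delicate argument applies to $r_b$ (and symmetrically to $r_t$): while still in the first batch, the \texttt{off}-branch assigns it light \texttt{terminal}; if instead $r_t$ has already moved leftward and $r_b$ now belongs to a different batch, the interior robots are still on $\mathcal{L}_V(r_b)$ with light \texttt{interior}, $r_b$ is terminal of its current batch, and the unique robot in $\mathcal{H}_L^O(r_b)$ is $r_t$ with light \texttt{terminal}, so line~\ref{code: p1_1}'s predecessor branch fires, assigning $r_b$ light \texttt{terminal} and initiating a $\frac{n+3}{2}$-unit leftward move.

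\textbf{Invariant and progress.} A case analysis on the three \texttt{terminal}-branches, parallel to that of Theorem \ref{thm_phase12}, shows that every destination computed by $r_t$ lies in $[P_t, \text{original position of } r_t]$ (and similarly for $r_b$): the first branch is enabled only when $r_t$ is still on the original first-batch vertical, where moving $\frac{n+3}{2}$ lands exactly at $P_t$; the second branch computes $d$ as the distance to the leftmost robot in $\mathcal{H}_R^O(r_t)$, and since that leftmost robot lies no further left than the original first-batch vertical, the new destination is at most $P_t$; and the third branch aligns $r_t$ with a \texttt{terminal} robot in $\mathcal{H}_L^O(r_t)$ which, by induction on the snapshots, is not beyond $P_t$. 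Combined with non-rigid but monotone leftward motion, this gives the invariant. For progress I introduce the potential $\Phi = (x_t+\tfrac{n+3}{2}) + (x_b+\tfrac{n+3}{2}) \in [0,n+3]$, and argue that whenever $\Phi > 0$, at least one of $r_t, r_b$ is enabled to execute a leftward move: a short case analysis on whether $r_t,r_b$ are coincident, or one strictly left/right of the other, shows that one of the three \texttt{terminal}-branches fires for at least one robot. By the non-rigidity assumption, such a move contributes a drop of at least $\min(\delta,\Phi)$ in $\Phi$, so finite activations suffice to drive $\Phi$ to $0$. Once $\Phi = 0$ we have $r_t,r_b$ at $P_t,P_b$; then the second \texttt{terminal}-branch computes $d = \tfrac{n+3}{2}$ and issues no further move. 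Meanwhile, no robot outside the first batch has any branch of \textsc{Phase1()} enabled, since the only vertical line carrying an \texttt{interior} robot is $\mathcal{L}_V(r^i)$, which contains no robot of any other original batch; hence all such robots remain stable with light \texttt{off} throughout.

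\textbf{Main obstacle.} The trickiest part is the transient regime in which $r_t$ has already moved left while $r_b$ still has light \texttt{off}, so that the current first batch is $\{r_t\}$ and what was the first batch is now $\{r_b, r^1, \ldots, r^{m-2}\}$. One must verify that in this regime the auxiliary ``\texttt{interior} on $\mathcal{L}_V(r)$'' branch of Phase~1 is triggered exactly for $r_b$ (so that $r_b$ correctly transitions to \texttt{terminal} and begins its leftward move), and not for any robot of the original second batch or later. This reduces to observing that those later batches lie on distinct vertical lines from $\mathcal{L}_V(r^1) = \cdots = \mathcal{L}_V(r^{m-2})$, so the \texttt{interior} guard can only match $r_b$. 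Handling this window, together with the interleaving of partial moves by $r_t$ and $r_b$ in the progress argument, is the principal subtlety of the proof.
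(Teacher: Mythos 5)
Your proof takes essentially the same route as the paper's: first show that both terminal robots eventually acquire light \texttt{terminal} and begin moving (including the transient case where one departs before the other wakes, which both you and the paper resolve via the \texttt{interior}-lit robots remaining on the original vertical line), then reuse the no-overshoot/progress argument of Theorem~\ref{thm_phase12} adapted to the three \texttt{terminal}-branches, while observing that the interior robots and all later batches stay stable. Your write-up is actually more explicit than the paper's (which simply defers the convergence argument to the earlier theorem), and it shares the paper's one overstatement: under an adversarial schedule an interior robot may first be activated only after both terminals have left the original vertical, at which point it is no longer in the first batch and sees no \texttt{interior}-lit robot on $\mathcal{L}_V(r)$, so it keeps light \texttt{off} --- which is precisely why the paper elsewhere only promises that the second batch is uniformly \texttt{interior} \emph{or} \texttt{off}.
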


  \begin{proof}
   Let $r_1$ and $r_2$ be the two terminal robots of the first batch of $\mathbb{C}(0)$. We will first show that both robots will start moving after some finite time. Let $r_1'$ and $r_2'$ be the two non-terminal robots of the first batch that are adjacent to $r_1$ and $r_2$ respectively ($r_1' = r_2'$, if there are exactly three robots in the batch). It is easy to see that one of the robots, say $r_i$ will start moving. It suffices to argue for the situation where $r_i$ starts moving before the other terminal robot $r_j$ wakes up. Notice that since $r_i$ starts moving, $r_i'$ must have set its light to \texttt{interior}. Then the algorithm ensures that (See line \ref{code: p1_0} of Algorithm \ref{algo:phase1}) eventually $r_j'$ must also change its light to \texttt{interior}. So when $r_j$ takes snapshot, it finds that the conditions in line \ref{code: p1_1} of Algorithm \ref{algo:phase1} are satisfied, and hence will decide to move.
   
   Notice that in this case, our algorithm asks the robots to move $\frac{n+3}{2}$ units. This is because, in this case, the non-terminal robots of the first batch would be the second batch of the desired configuration. Arguing similarly as in the proof of Theorem \ref{thm_phase11}, we can show that they will be able to do so. Also, it is easy to see that all non-terminal robots of the first batch of $\mathbb{C}(0)$ will eventually turn their lights to \texttt{interior}, but will stay stationary throughout the process. All the other robots will also remain stable during the process. Hence, we shall obtain the required configuration at some time $T' > 0$. \qed

  \end{proof}

\subsection{Correctness of Phase 2}\label{appendix_p2}

  \begin{theorem}\label{thm_phase2}
 If $\mathbb{C}(T')$ is the configuration from Theorem \ref{thm_phase11} or Theorem \ref{thm_phase12}, then $\exists$ $T_1 > T'$ such that $\mathbb{C}(T_1)$ is a leader configuration. 
 
\end{theorem}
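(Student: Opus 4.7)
My plan is to prove the theorem by tracing Phase 2 batch-by-batch, via an induction on the batch index $j$. Index the batches of $\mathbb{C}(T')$ from left to right by $1, \ldots, k$, with central axes $x_1 < x_2 < \cdots < x_k$, and let $\mathcal{L}$ be the horizontal line through the midpoint of the two \texttt{terminal}-lit robots of $B_1$. The inductive invariant says that either a unique robot has already set its light to \texttt{switch off} (so \textsc{BecomeLeader()} is running), or the configuration is stable with $B_1, \ldots, B_j$ sitting at horizontal offsets $x_1,\ x_1 + (1+\frac{1}{n}),\ \ldots,\ x_1 + (j-1)(1+\frac{1}{n})$; batches $B_1, \ldots, B_{j-1}$ are symmetric about $\mathcal{L}$ with $\mathcal{L}$ missing every center; the two terminals of $B_{j-1}$ carry light \texttt{move}; and every other robot of $B_1, \ldots, B_j$ carries light \texttt{off}.

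The inductive step rests on a \emph{geometric budget} lemma. Because $x_2 - x_1 \geq \frac{n+3}{2}$ and $x_{i+1} - x_i \geq 1$ for $i \geq 2$ (fat disks of diameter $1$ on distinct vertical lines cannot be closer than $1$ unit horizontally), a telescoping estimate gives $x_j - x_1 \geq (j-2) + \frac{n+3}{2}$. Hence $x_j - (x_1 + (j-2)(1+\frac{1}{n})) \geq \frac{n+3}{2} - \frac{j-2}{n}$, which exceeds $1 + \frac{1}{n}$ for $n \geq 3$ and $j \leq k$; so $B_j$ has strict leftward room to reach its target offset. An analogous computation shows that after the move the horizontal distance from $B_j$ to $B_{j+1}$ is at least $2$ units, which is more than enough for each terminal robot of $B_j$ to have an unobstructed view of both $B_{j-1}$ (so it can reconstruct $\mathcal{L}$ from the two \texttt{move}-lit terminals there) and $B_{j+1}$ (on which the election rule operates).

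The next ingredient is an \emph{atomicity} lemma for the \texttt{ready}/\texttt{move} handshake. The deferral clause on line \ref{code: imp} and the gating order (every robot of $B_j$ must display \texttt{ready} before the terminals of $B_{j-1}$ switch to \texttt{move}, and only those \texttt{move}-lit terminals unblock actual motion in $B_j$) together guarantee that no robot of $B_{j+1}$ or beyond can ever observe a partially-migrated $B_j$ and prematurely believe itself eligible to elect. A snapshot-by-snapshot case analysis analogous to the one used in the proof of Theorem \ref{thm_phase12} confirms that each robot of $B_j$ makes its prescribed leftward hop exactly once, stops on the target axis, and resets its light to \texttt{off}, ruling out the pathologies described in the main text such as two simultaneous \texttt{switch off} lights or a stranded sub-batch.

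Combining these lemmas with the case distinction in \textsc{ElectLeader()}, some smallest $i \leq k$ must trigger Case 1 (asymmetric $B_i$) or Case 2 ($\mathcal{L}$ through a center of $B_i$), because otherwise $\mathbb{C}(T')$ would be symmetric about $\mathcal{L}$, contradicting the hypothesis of Theorem \ref{thm_main} on $\mathbb{C}(0)$. In Case 1 the unique terminal of $B_{i-1}$ in the dominant half, and in Case 2 the unique on-$\mathcal{L}$ robot of $B_{i}$, sets its light to \texttt{switch off} and invokes \textsc{BecomeLeader()}. The candidate then (i) optionally slides horizontally left into the free space guaranteed by the geometric budget (if it is a mid-batch robot), (ii) moves vertically by at least $2$ units so that all other robots lie in one open half-plane and at vertical distance at least $2$ from it, and (iii) realigns leftward batch by batch, each realignment forcing the neighbours' lights to reset to \texttt{off} via line \ref{code: imp}. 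The resulting stable configuration meets all four requirements of a leader configuration, so setting its light to \texttt{leader} at that moment produces the desired time $T_1$. The principal obstacle throughout is the interaction between the asynchronous non-rigid scheduler and the sequential batch protocol, so the core technical effort lies in the temporal analysis with pending moves needed to validate the atomicity lemma.
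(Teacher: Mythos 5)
There is a genuine gap in your ``geometric budget'' lemma: the premise that $x_{i+1}-x_i\geq 1$ for consecutive batches is false. Two unit-diameter disks on distinct vertical lines must have \emph{centers} at distance at least $1$, but if they sit at different heights their vertical lines can be arbitrarily close, so consecutive batches can be horizontally much closer than $1$ unit. Without that per-batch gap your telescoping estimate collapses to $x_j - x_1 \geq \frac{n+3}{2}$, and then the required slack $x_j - \bigl(x_1 + (j-1)(1+\frac{1}{n})\bigr) \geq \frac{n+3}{2} - (j-1)(1+\frac{1}{n})$ becomes negative once $j$ is close to $k$, which you only bound by $n$. The paper closes this hole with a counting argument you are missing: every batch that fails to elect (i.e., each of $B_1,\ldots,B_j$ for the relevant $j$) is symmetric about $\mathcal{L}$ with no center on $\mathcal{L}$, hence contains at least two robots, so $j < n/2$; substituting $j < n/2$ into $\frac{n+3}{2} - (j-1)(1+\frac{1}{n})$ yields a bound of $2+\frac{1}{n} > 2$, which simultaneously gives the room to move, the full mutual visibility of adjacent batches, and the $1$ unit of horizontal clearance needed by a mid-batch leader candidate. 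You should replace the false disk-packing claim with this pairing bound.

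A secondary omission: in step (iii) you assert that each leftward realignment simply ``forces the neighbours' lights to reset to \texttt{off} via line \ref{code: imp},'' but when the candidate aligns with $B_1$ one of the two former terminal robots loses sight of it (the other robot of $B_1$ occludes it), concludes from its local view that it is in Phase~1, and sets its light back to \texttt{terminal}. The paper has to argue explicitly that this robot nevertheless does not move (a \texttt{terminal} robot only moves upon seeing another \texttt{terminal} or \texttt{interior} robot) and reverts to \texttt{off} once the candidate moves further left. Your proof should account for this transient regression rather than treating the reset as immediate. The rest of your outline --- the induction on batches, the handshake atomicity, and the final case split on $B_i$ --- matches the paper's argument.
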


\begin{proof}
 We assumed that $\mathbb{C}(0)$ is not symmetric with respect to any line that is parallel to the $X$-axis and does not pass through the center of any robot. The same should be true for $\mathbb{C}(T')$ because it is obtained from $\mathbb{C}(0)$ by moving the two terminal robots of the first batch horizontally by equal amounts. So, in particular, if $\mathcal{L}$ is the horizontal line passing through the mid-point of the line segment joining the terminal robots of the first batch of $\mathbb{C}(T')$, then either  $\mathbb{C}(T')$ is asymmetric with respect to $\mathcal{L}$ or $\mathbb{C}(T')$ is symmetric with respect to $\mathcal{L}$, but there is a robot whose center lies on $\mathcal{L}$. So there is at least one batch that is either asymmetric with respect to $\mathcal{L}$ or symmetric with respect to $\mathcal{L}$, but it has a robot whose center lies on $\mathcal{L}$. Let $B_i$ be the first such batch. Obviously, $i > 1$.

  We have discussed in Section \ref{sec: p2} how different batches from left to right will sequentially try to elect a leader. Obviously, $B_{i-1}$ will be successful. If (Case 1) $B_{i}$ is asymmetric with respect to $\mathcal{L}$, then one of the terminal robots of $B_{i-1}$ will start executing \textsc{BecomeLeader()}, while if (Case 2) $B_{i}$ is symmetric with respect to $\mathcal{L}$, then the terminal robots of $B_{i-1}$ will change their lights to \texttt{symmetry} and the robot $r \in B_i$ lying on $\mathcal{L}$ will start executing \textsc{BecomeLeader()}.

  We have to show that each of $B_{1}, \ldots, B_{i-1}$ have enough space to successfully execute the instructions of the algorithm.
  We shall show that when the terminal robots of $B_{j}, j \geq 1$ call \textsc{ElectLeader()}, the distance between $B_{j}$ and $B_{j+1}$ will be at least 2 units. This will ensure the following:
  
  \begin{enumerate}
   \item the robots of $B_{j}$ can fully see the batch $B_{j+1}$,
   
   \item if the terminal robots of $B_{j}$ fail to elect leader and turn their lights to \texttt{failed}, the robots of $B_{j+1}$ will be able to see this and also will be able to move left and place themselves $1 + \frac{1}{n}$ units apart from $B_{j}$,
   
   \item if the middle robot of $B_{j+1}$ is to become leader, it can move 1 unit to the left so that there is no obstruction to move vertically. 
  \end{enumerate}

  This is obvious for $j = 1$, because $B_1$ is $\frac{n+3}{2} > 2$ units away from $B_2$. Now consider a batch $B_j, 1 < j \leq i-1$ calling \textsc{ElectLeader()}. Initially the distance between $B_1$ and $B_2$ was at least $\frac{n+3}{2}$ and then each batch $B_l, 1 < l \leq j$, have moved left and placed themselves exactly $1 + \frac{1}{n}$ units apart from $B_{l-1}$. This implies that the batch $B_{j}$ has moved at least $\frac{n+3}{2} - (j-1)(1+\frac{1}{n})$. So, after the movement, the distance between $B_{j}$ and $B_{j+1}$ is at least $\frac{n+3}{2} - (j-1)(1+\frac{1}{n})$. Now note that each $B_1, \ldots, B_{j}$ is symmetric with respect $\mathcal{L}$ and center of no robot of these batches lie on $\mathcal{L}$. Therefore, each $B_1, \ldots, B_{j}$ has at least 2 robots. So, $2j < n \Rightarrow j < \frac{n}{2}$. Therefore, the required distance between $B_{j}$ and $B_{j+1}$ is 
  \begin{alignat*}{2}
   \geq & \frac{n+3}{2} - (j-1)(1+\frac{1}{n}) \\
   > & \frac{n+3}{2} - (\frac{n}{2} - 1)(1+\frac{1}{n}) \\
   = & 2 + \frac{1}{n}
  \end{alignat*}

  When a robot finds itself eligible to become leader, it sets its light to \texttt{switch off} and starts executing \textsc{BecomeLeader()}. We have not given any pseudocode for \textsc{BecomeLeader()}. The execution of \textsc{BecomeLeader()} for a robot with light \texttt{off} was described in the proof of Theorem \ref{thm_phase11}.  An informal description of the process for a robot with light \texttt{switch off} was given in Section \ref{sec: p2}. 
  
  In Case 1, a terminal robot $r$ of $B_{i-1}$ will start executing \textsc{BecomeLeader()}. Obviously there is one vertical direction for $r$ to move without any obstruction. However, in Case 2, unless $B_{i}$ has exactly one robot, the middle robot $r$ of $B_{i}$ has both vertical directions blocked. So, $r$ will move 1 unit to the left and there will be no obstruction to move vertically. So in either case, $r$ will move vertically and eventually all other robots will be in $\mathcal{H} \in \{\mathcal{H}_{U}^{O}(r), \mathcal{H}_{B}^{O}(r)\}$ and the distance of any robot of $\mathcal{R} \setminus \{r\}$ from $\mathcal{L}_H(r)$ will be at least 2 units.  Then $r$ will sequentially move to the central axes of the batches on its left and all robots that having light set to any color other than \texttt{off}, will turn them back to \texttt{off}.

  However, there can be a complication regarding the first batch $B_1$ because, as we shall see, there is a possibility that they might start executing the algorithm for Phase 1. Notice that when $r$ starts executing \textsc{BecomeLeader()}, the two robots $r^1_1$ and $r^1_2$ of $B_1$ have their lights set to \texttt{move}. Notice that they will change their light to \texttt{off}, only when $r$ aligns itself with $B_2$ at some time $t_1$ (See line \ref{code: imp} of Algorithm \ref{algo:phase2}). So, upto time $t_1$, $r^1_1$ and $r^1_2$ with lights \texttt{move} will remain stable. After $t_1$, they will change their lights to \texttt{off}. After this, $r$ will start moving to the left. At some time $t_2$, $r$ will become aligned with $r^1_1$ and $r^1_2$. Clearly, $r^1_1$ and $r^1_2$ will remain stable in $[t_1, t_2]$. In $\mathbb{C}(t_2)$, one of $r^1_1$ and $r^1_2$, say $r^1_1$, can not see $r$. So, if $r^1_1$ takes a snapshot at $t_2$, it will decide to turn its light to \texttt{terminal} according to Algorithm \ref{algo:phase1}. However, $r^1_2$ will stay stable as it does not loose sight of $r$ with light \texttt{switch off}. Hence, it does not execute Algorithm \ref{algo:phase1}.  Recall that a robot with light \texttt{terminal} decides to move only if it finds another robot with light \texttt{terminal} or \texttt{interior}. So, $r^1_1$ will not move. When $r$ moves further left, $r^1_1$ will again see it and will change its light to \texttt{off}. Therefore, we shall obtain a leader configuration at some time $T_1$. \qed
 
\end{proof}

 \clearpage

 \begin{figure}[p]
\centering
\subcaptionbox[Short Subcaption]{ The batch $B_3$ is $1 + \frac{1}{n}$ units to the right of $B_2$. The terminal robots $r^3_1$, $r^3_2$ of $B_3$ have change their lights to \texttt{failed}. However, the robot $r$ of $B_3$ still has its light set to \texttt{ready}. 
     \label{}
}
[
    0.48\textwidth 
]
{
    \fontsize{8pt}{8pt}\selectfont
    \def\svgwidth{0.48\textwidth}
    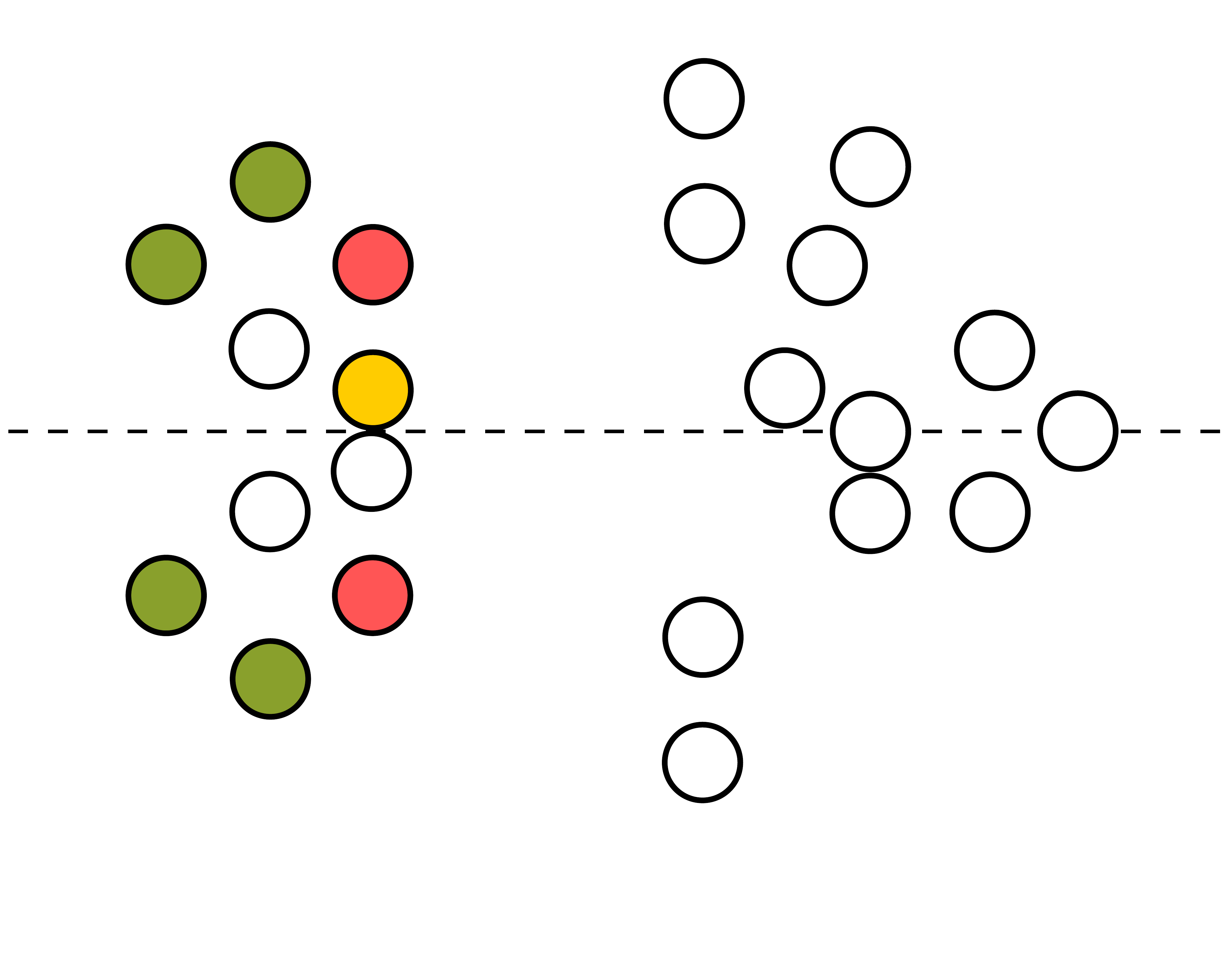
}
\hfill
\subcaptionbox[Short Subcaption]{The robots of $B_4$ waits until $r$ changes its light to \texttt{off}.
     \label{}
}
[
    0.48\textwidth 
]
{
    \fontsize{8pt}{8pt}\selectfont
    \def\svgwidth{0.48\textwidth}
    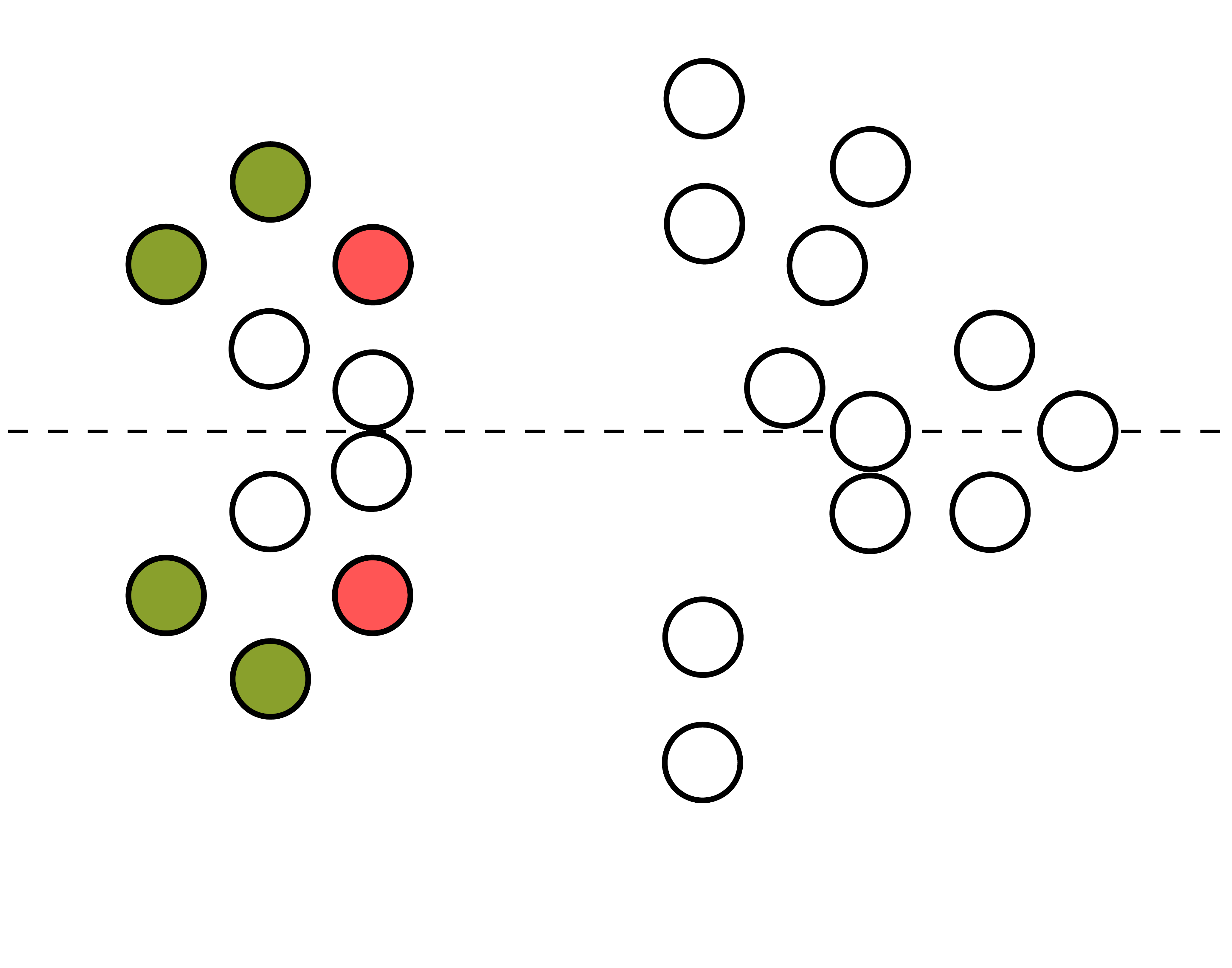
}
\hfill
\subcaptionbox[Short Subcaption]{The robots of $B_4$ are changing their lights to \texttt{ready}. The terminal  robots $r^3_1$, $r^3_2$ of $B_3$ will wait for all the robots of $B_4$ to change their lights.
       \label{}
}
[
    0.48\textwidth 
]
{
    \fontsize{8pt}{8pt}\selectfont
    \def\svgwidth{0.48\textwidth}
    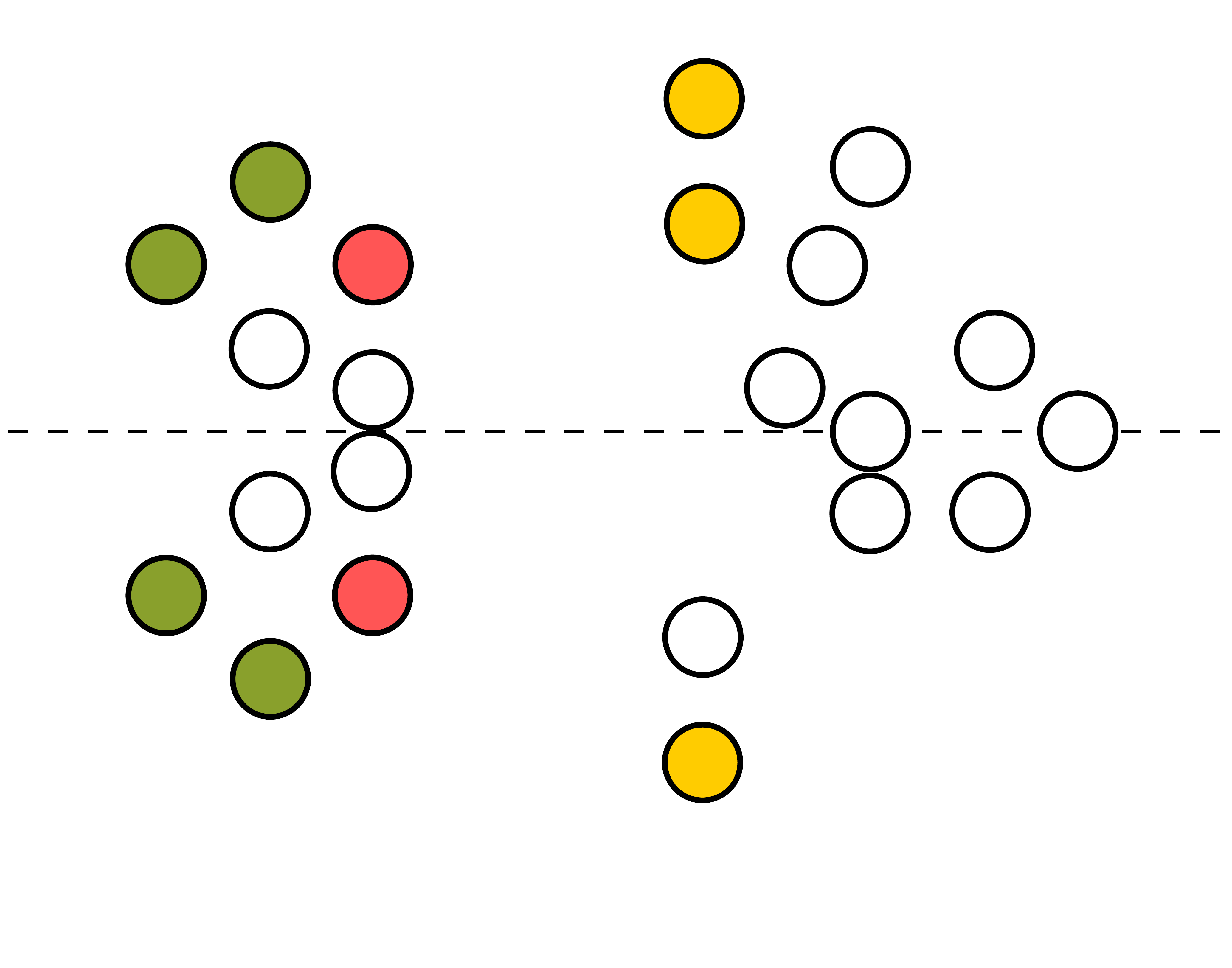
}
\hfill
\subcaptionbox[Short Subcaption]{
     After all the robots of $B_4$ change their lights to \texttt{ready},  $r^3_1$ and $r^3_2$ change their lights to \texttt{move}.
}
[
    0.48\textwidth 
]
{
    \fontsize{8pt}{8pt}\selectfont
    \def\svgwidth{0.48\textwidth}
    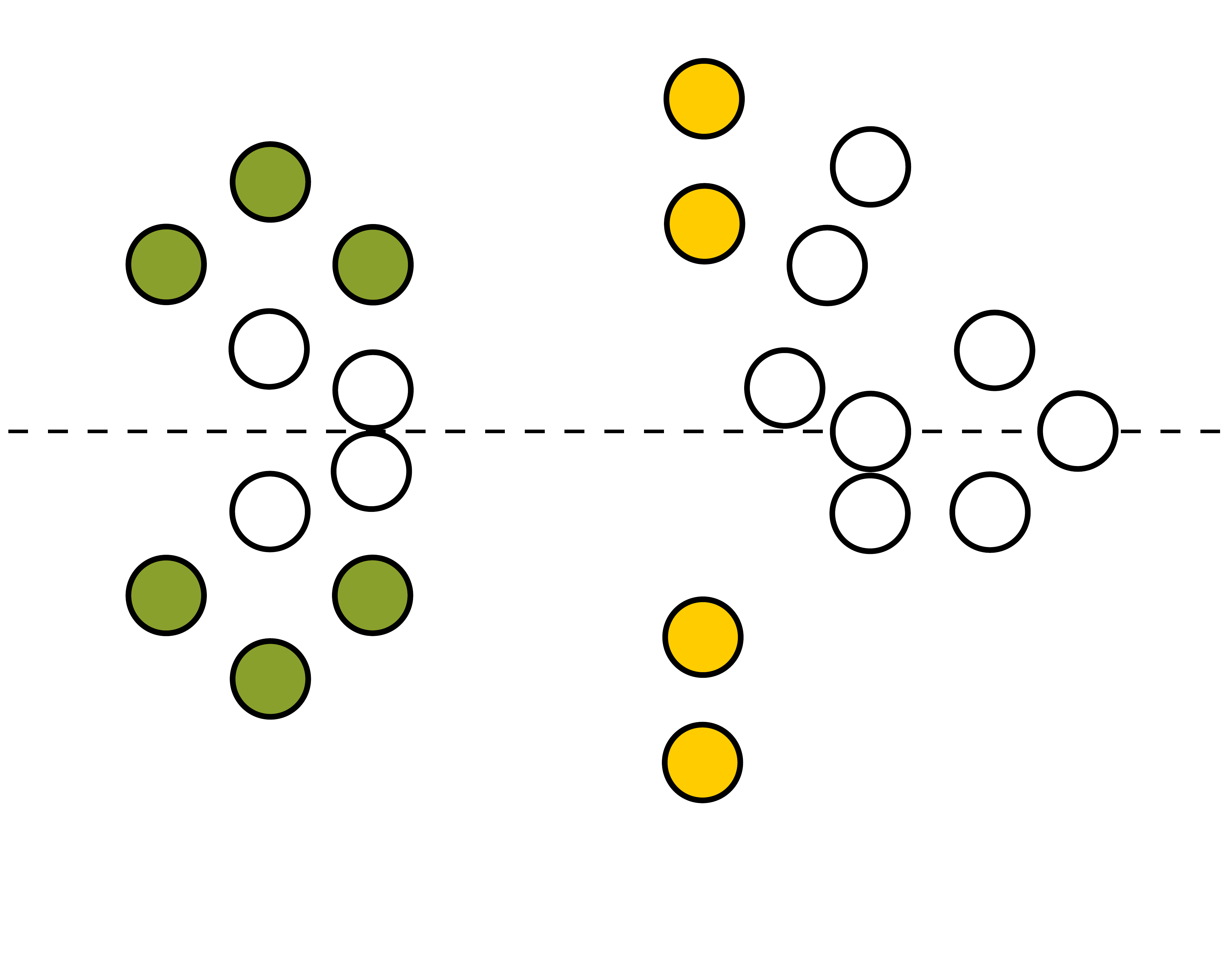
}
\hfill

\subcaptionbox[Short Subcaption]{After $r^3_1$ and $r^3_2$ change their lights to \texttt{move}, the robots of $B_4$ start moving. Here $r''$ has not yet started. If it takes a snapshot of this configuration, it will decide to align itself with $r'$. 
     \label{}
}
[
    0.48\textwidth 
]
{
    \fontsize{8pt}{8pt}\selectfont
    \def\svgwidth{0.48\textwidth}
    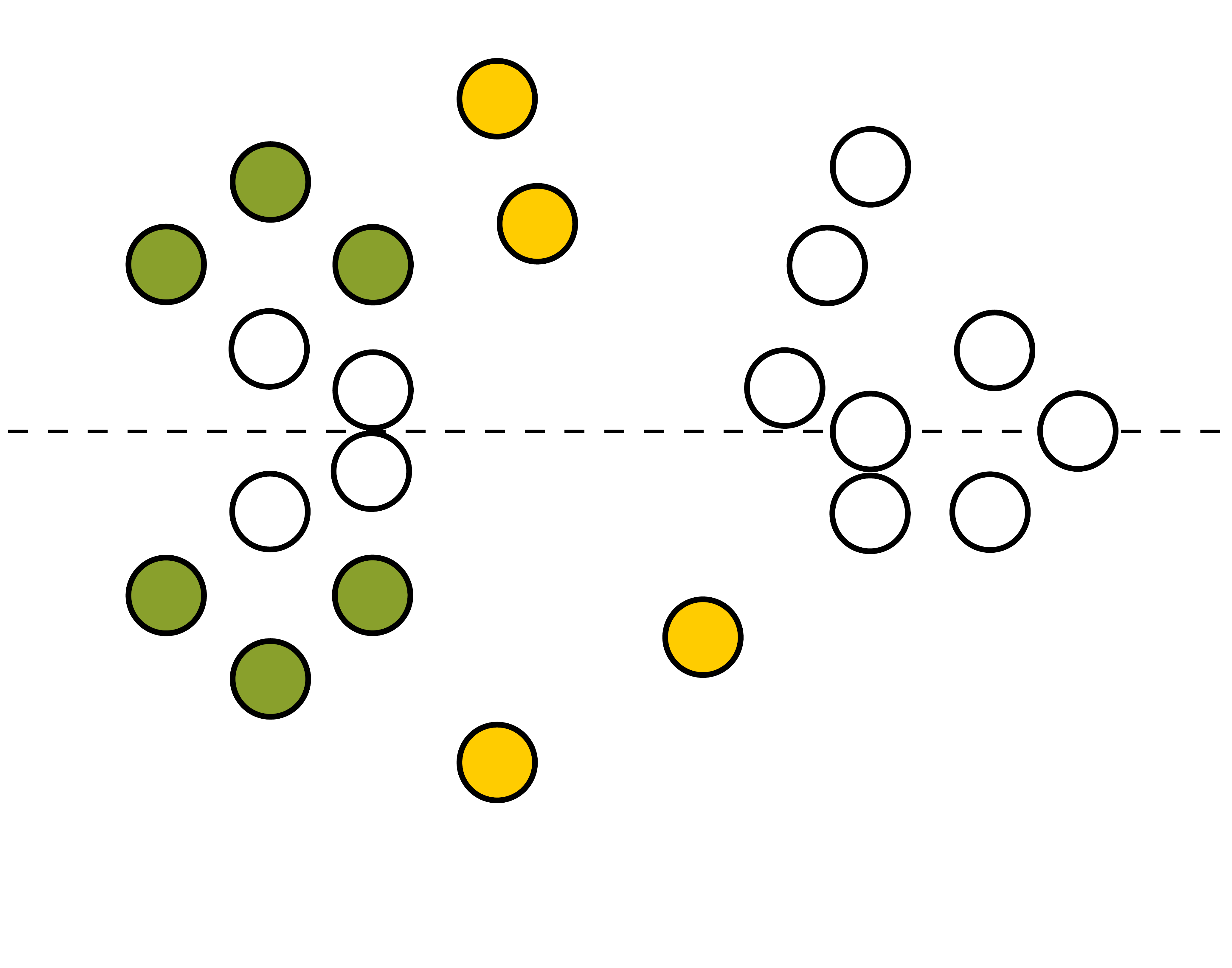
}
\hfill
\subcaptionbox[Short Subcaption]{ Eventually the robots of $B_4$ will stop $1 + \frac{1}{n}$ units to the right of $B_3$.
     \label{}
}
[
    0.48\textwidth 
]
{
    \fontsize{8pt}{8pt}\selectfont
    \def\svgwidth{0.48\textwidth}
    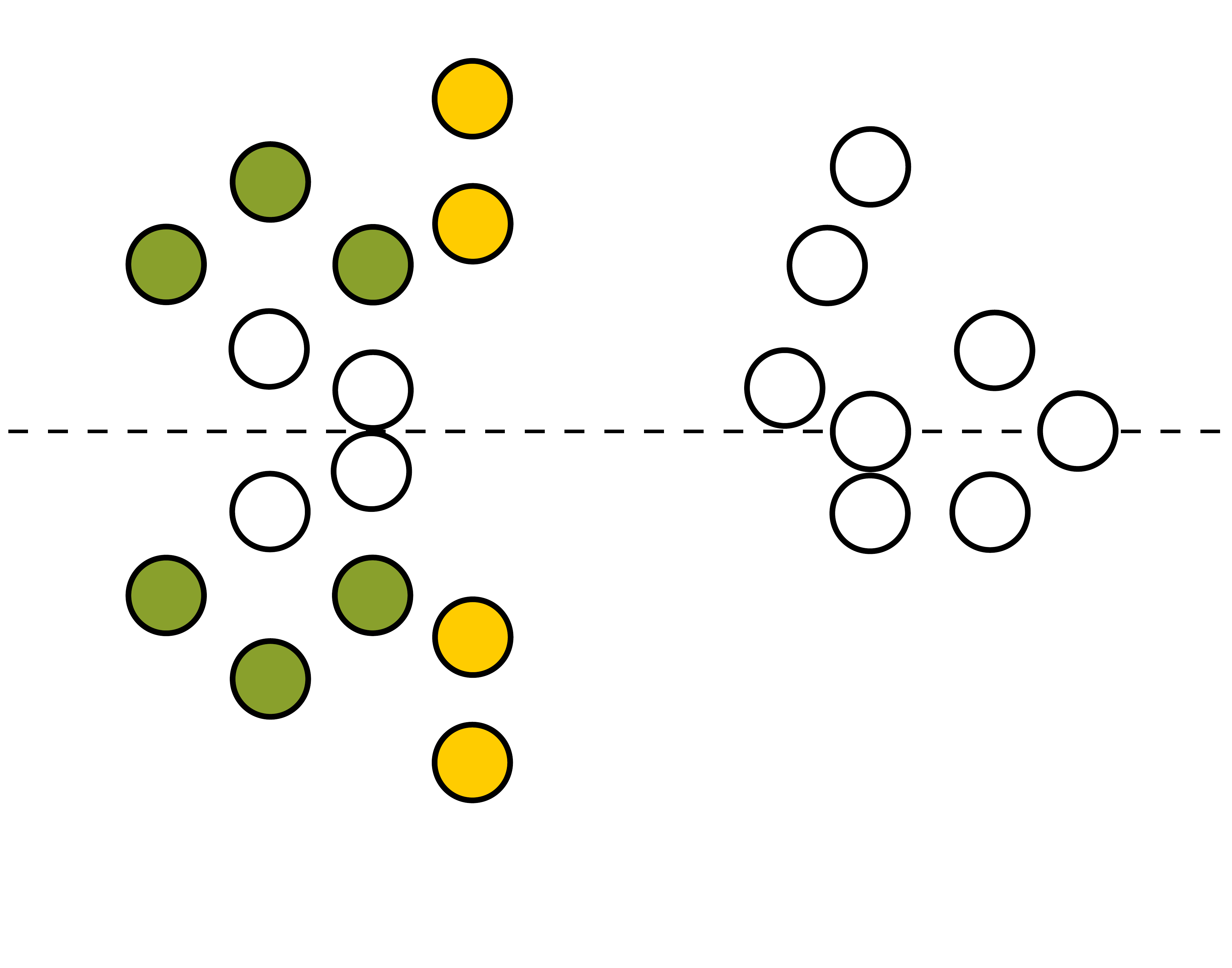
}
\caption[Short Caption]{Coordinated movement of a batch in Phase 2.}
\label{Fig: co_move}
\end{figure}

 \begin{figure}[p]
\centering
\subcaptionbox[Short Subcaption]{ (Case 1) $r$ finds itself in the dominant half and changes its light to \texttt{switch off}. Two robots in its batch, $r'$ and $r''$, still have their lights set to \texttt{ready}. 
     \label{}
}
[
    0.48\textwidth 
]
{
    \fontsize{8pt}{8pt}\selectfont
    \def\svgwidth{0.48\textwidth}
    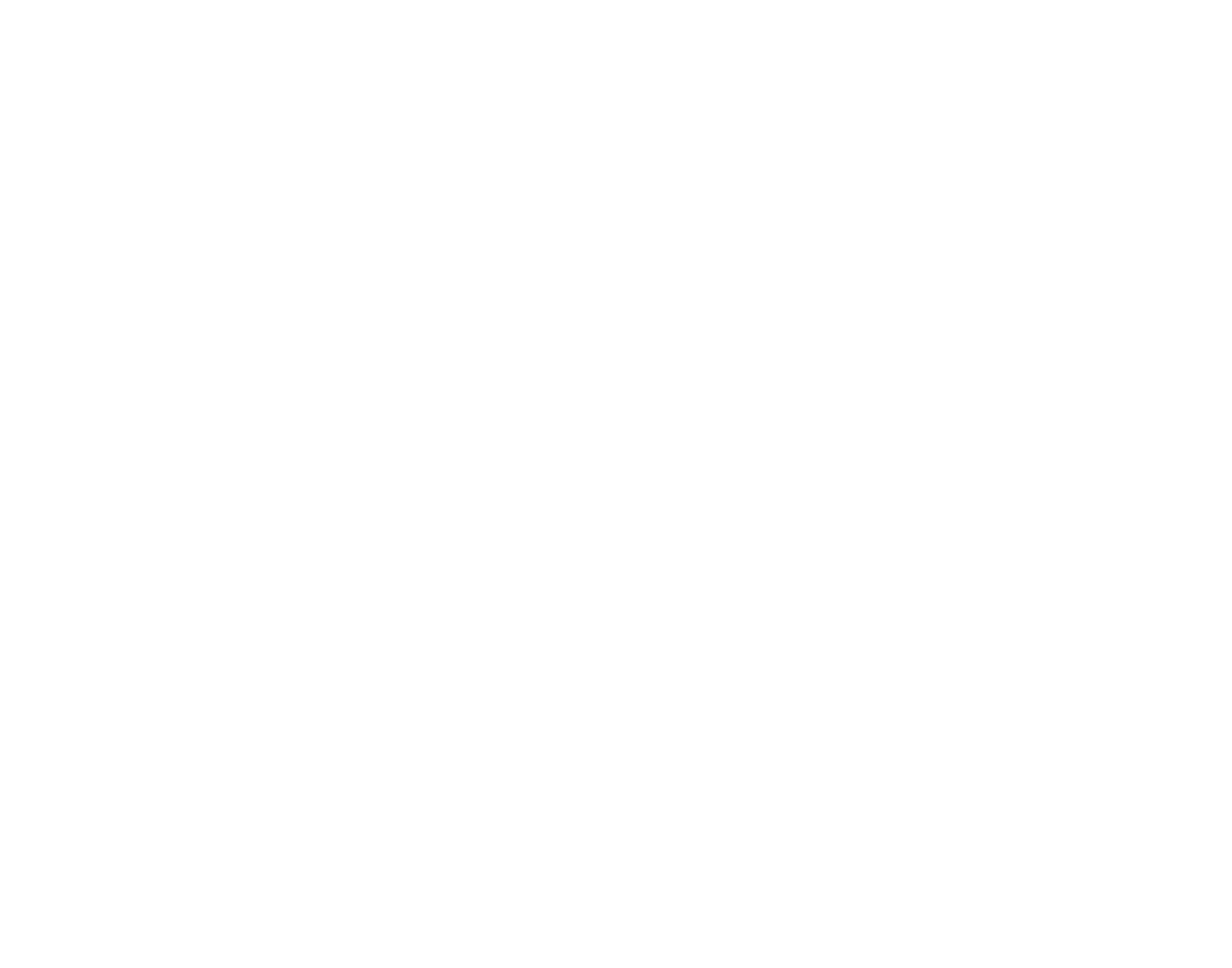
}
\hfill
\subcaptionbox[Short Subcaption]{ (Case 1) $r$ starts moving after $r^3_1$, $r^3_2$ and $r'$ have changed their lights to \texttt{off}. But $r$ can not see $r''$ which is yet to change its light.
     \label{}
}
[
    0.48\textwidth 
]
{
    \fontsize{8pt}{8pt}\selectfont
    \def\svgwidth{0.48\textwidth}
    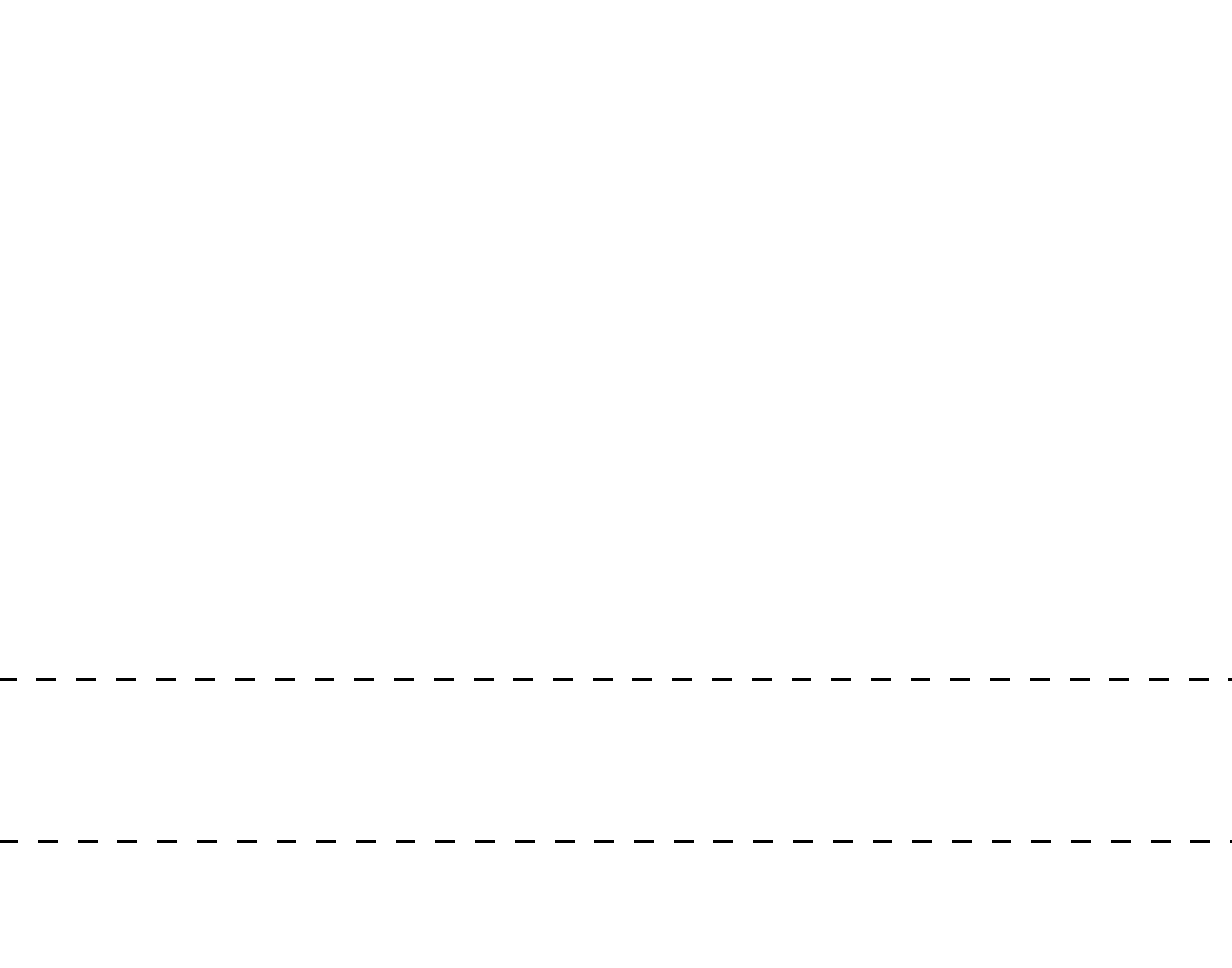
}
\hfill
\subcaptionbox[Short Subcaption]{(Case 2) $r$ changes its light to \texttt{switch off} as $r^4_1$ and $r^4_2$ change their lights to \texttt{symmetry}.
       \label{}
}
[
    0.48\textwidth 
]
{
    \fontsize{8pt}{8pt}\selectfont
    \def\svgwidth{0.48\textwidth}
    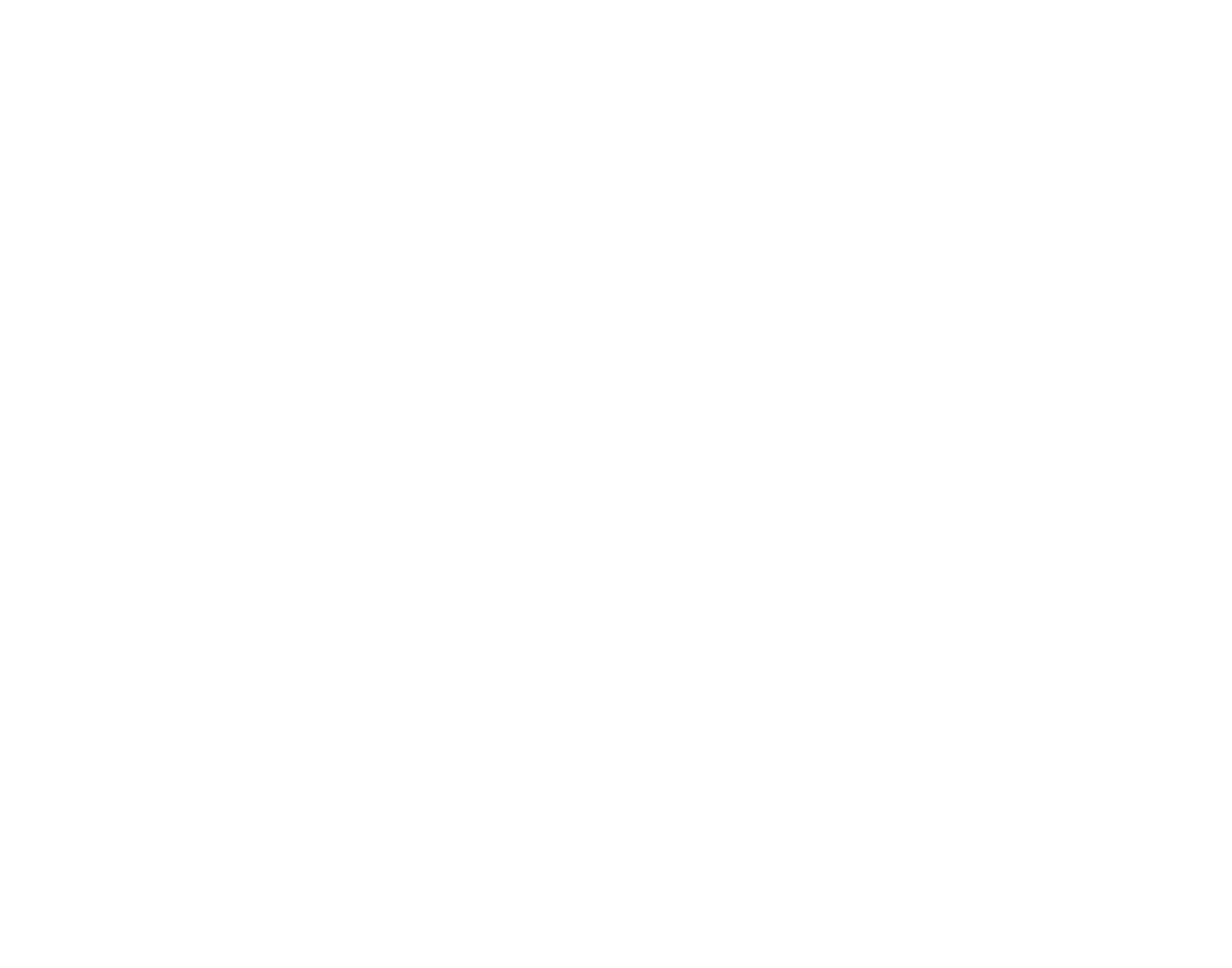
}
\hfill
\subcaptionbox[Short Subcaption]{
     (Case 2) $r^4_1$ and $r^4_2$ change their lights to \texttt{off}. Then $r$ will move horizontally left and then vertically.
}
[
    0.48\textwidth 
]
{
    \fontsize{8pt}{8pt}\selectfont
    \def\svgwidth{0.48\textwidth}
    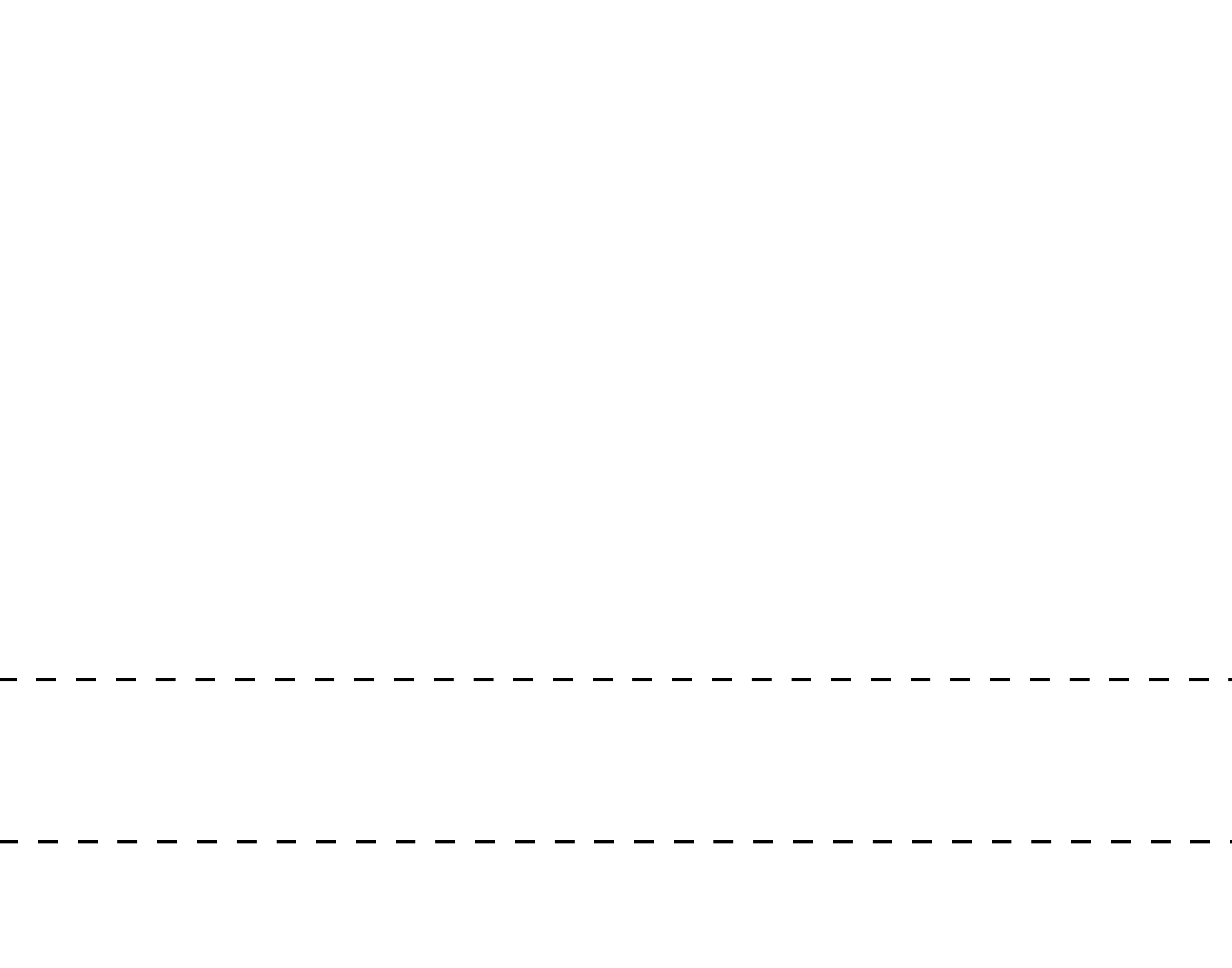
}
\hfill

\subcaptionbox[Short Subcaption]{ When $r$ aligns itself with $B_3$ all robots of $B_2$ and $B_4$ can see it. So, any robot of $B_2$ and $B_4$ with light not set to \texttt{off} will change its light to \texttt{off}. 
     \label{}
}
[
    0.48\textwidth 
]
{
    \fontsize{8pt}{8pt}\selectfont
    \def\svgwidth{0.48\textwidth}
    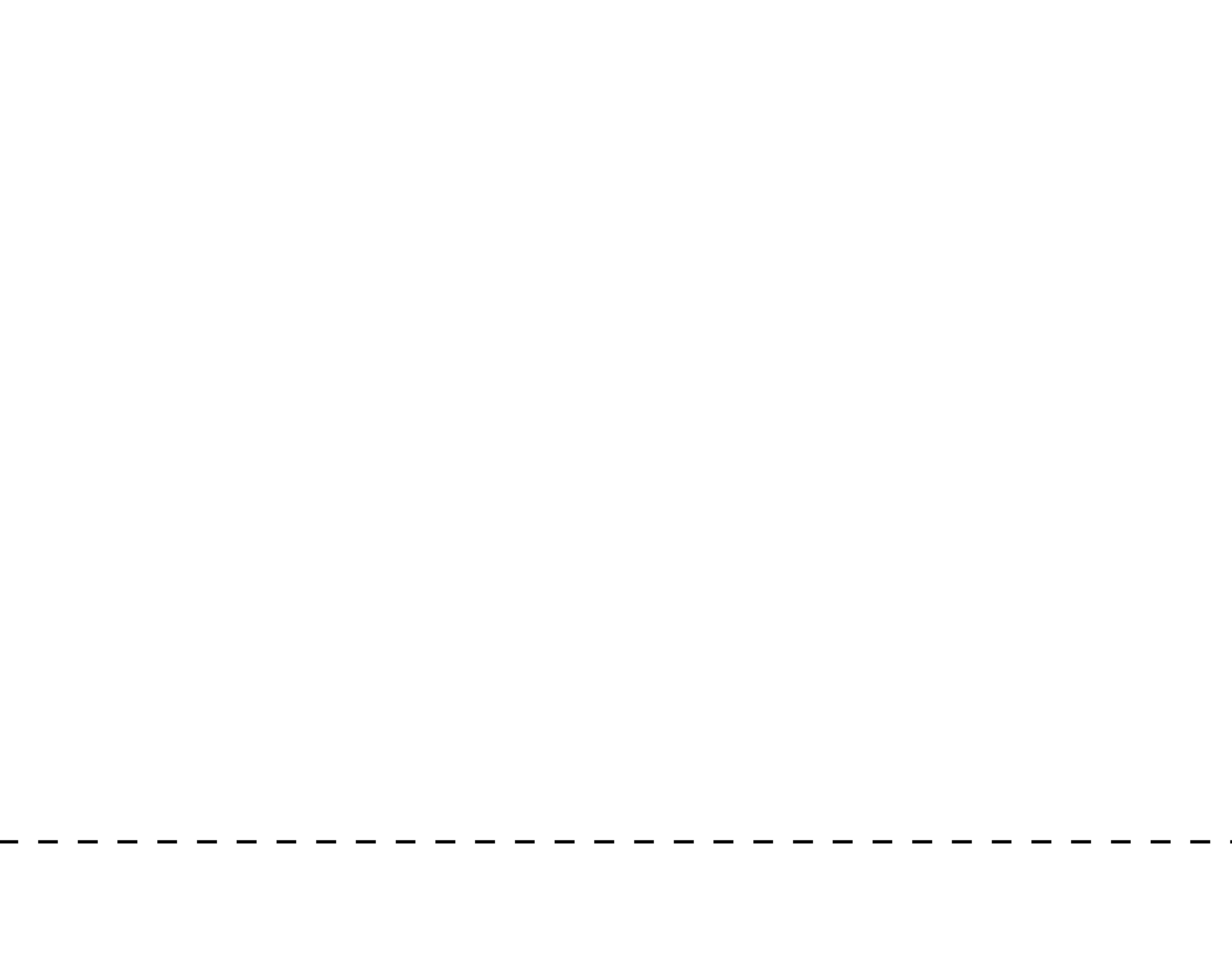
}
\hfill
\subcaptionbox[Short Subcaption]{ When $r$ aligns itself with $B_1$, $r^1_1$ thinks that it is in Phase 1, and changes its light to \texttt{terminal}. However, when $r$ moves further left, $r^1_1$ will change its light back to  \texttt{off}.
     \label{}
}
[
    0.48\textwidth 
]
{
    \fontsize{8pt}{8pt}\selectfont
    \def\svgwidth{0.48\textwidth}
    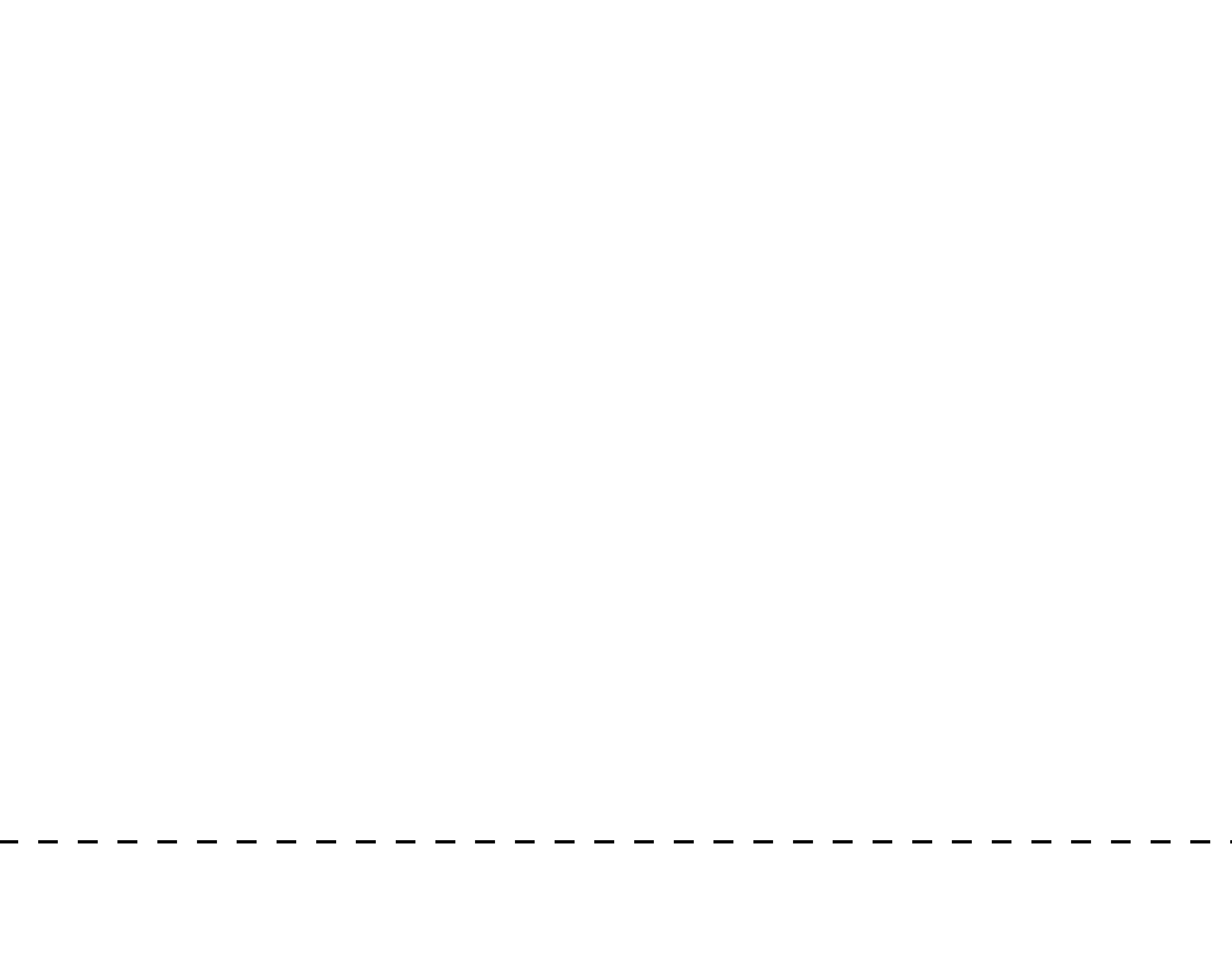
}
\caption[Short Caption]{The execution of \textsc{BecomeLeader()} in Phase 2.}
\label{Fig: become2}
\end{figure}

 \clearpage

  \section{Correctness of Stage 2}\label{appendix_s2}

  \begin{theorem}\label{thm_stage2}
  If $\mathbb{C}(T_1)$ is a leader configuration, then $\exists$ $T_2 > T_1$ such that $\mathbb{C}(T_2)$ is a final configuration similar to the given pattern.
\end{theorem}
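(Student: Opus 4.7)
My plan is to verify the three consecutive subgoals that Algorithm \ref{main_algorithm: stage 2} implements: (i) the robots agree on a common Cartesian frame from $\mathbb{C}(T_1)$ and embed $\mathbb{P}$ as the ordered target list $t_0,\ldots,t_{n-1}$; (ii) all non-leader robots descend sequentially to the slots $(1,-2),\ldots,(n-1,-2)$ on $\mathcal{L}_H(r_l)$; and (iii) each robot at $(i,-2)$ ascends to $t_{i-1}$, the leader finally ascends to $t_{n-1}$, and all lights switch to \texttt{done}. For (i), the unique \texttt{leader} light identifies $r_l$, the non-empty open half-plane among $\mathcal{H}_U^O(r_l), \mathcal{H}_B^O(r_l)$ fixes the positive $Y$-axis, the one-axis agreement fixes the $X$-axis, the robot diameter fixes the unit, and placing $r_l$ at $(0,-2)$ fixes the origin; every robot recomputes this frame from its local view in every LCM cycle, so the agreement persists throughout stage~2 regardless of movements.

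For (ii) I would proceed by induction on the number $i\in\{0,\ldots,n-1\}$ of non-leader robots currently at $(1,-2),\ldots,(i,-2)$. The inductive step shows that Algorithm \ref{main_algorithm: stage 2} selects exactly one further non-leader robot --- the leftmost one on the lowest horizontal line strictly above $\mathcal{L}_H(r_l)$ --- via the guard at line \ref{code s2: 0}, and that its straight-line trajectory to $(i+1,-2)$ is free of any other robot. Uniqueness follows because line \ref{code s2: 0} demands both (a) no robot strictly between the robot's row and $\mathcal{L}_H(r_l)$ and (b) the robot is leftmost on its own row. Obstruction-freedom follows because (a) keeps the strip $\mathcal{H}_B^O(r)\cap\mathcal{H}_U^O(r_l)$ empty and (b) together with the unit diameter keeps the relevant column clear. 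Non-rigidity is handled by observing that any premature stop of $r$ still satisfies the same guard, so $r$ will be reactivated and continue, terminating in finitely many cycles thanks to the $\delta$-bound; asynchrony is handled by observing that while $r$ has a pending move, no other robot can simultaneously satisfy the guard at line \ref{code s2: 0}.

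For (iii), once all $n-1$ non-leader robots sit at $(1,-2),\ldots,(n-1,-2)$, Option~2 of the algorithm permits each $r_i$ at $(i,-2)$ to move to $t_{i-1}$; the guard ``no \texttt{off} robot in $\mathcal{H}_U^O(r)$'' together with the top-to-bottom, right-to-left ordering of targets prevents the straight-line trajectories from interfering with robots still to be dispatched. After arriving, line \ref{code s2: 6} certifies $r_i$ as being at the correct $t_{i-1}$ and it switches its light to \texttt{done}; repeated application handles every non-leader robot, and once every non-leader robot is \texttt{done}, the leader's guard at line \ref{code s2: 5} fires, $r_l$ moves to $t_{n-1}$, and sets its light to \texttt{done}. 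The resulting configuration has every robot stable with its position coinciding with some $t_j$, so it is a final configuration similar to $\mathbb{P}$.

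The main obstacle will be managing the interplay between asynchrony, non-rigid movement, and the exact-position predicates in lines \ref{code s2: 3} and \ref{code s2: 6}: I must show that mid-move snapshots never cause two robots to target the same slot $(i,-2)$ or the same $t_{i-1}$, nor cause the algorithm to oscillate between the Phase~A and Phase~B branches of line \ref{code s2: 0}. The remedy, to be spelled out case by case, is that the guards of the relevant branches are mutually exclusive on every reachable configuration; in particular, Option~1 of line \ref{code s2: 0} picks out at most one mover at any instant, and the monotone propagation of \texttt{done} lights rules out any reversal from Phase~B back into Phase~A.
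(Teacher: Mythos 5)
Your decomposition matches the paper's: descent of the non-leader robots to $\mathcal{L}_H(r_l)$, then ascent to the targets, then the leader's final move. However, your step (i) contains a genuine error that the paper takes care to address. You claim that the positive $Y$-direction is recovered in every cycle as ``the non-empty open half-plane among $\mathcal{H}_U^O(r_l),\mathcal{H}_B^O(r_l)$'' and that ``the agreement persists throughout stage~2 regardless of movements.'' This is false at exactly the transition between your phases (ii) and (iii): once all $n$ robots sit at $(0,-2),\ldots,(n-1,-2)$, the configuration is collinear on $\mathcal{L}_H(r_l)$, both open half-planes are empty, and the agreement on `up'/`down' is lost. The paper explicitly notes this (``Notice that in this configuration, the agreement in `up' and `down' is lost'') and resolves it by letting $r_1$ --- which at that moment satisfies the guard of line \ref{code s2: 4} under either orientation --- choose its own local `up' arbitrarily; that choice is harmless because the pattern is only required up to reflection, and once $r_1$ reaches $t_0$ the half-plane containing it re-fixes `up' for every subsequent robot. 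Your argument for (iii) presupposes a persistently agreed frame and therefore has a hole precisely where the orientation must be re-bootstrapped.

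A second, smaller gap: in (ii) you assert that the selected robot's straight-line trajectory to $(i+1,-2)$ is free of other robots, and in (iii) you make the analogous claim for the ascent. For fat robots this need not hold --- the segment to $(i+1,-2)$ can clip the disks already parked at $(0,-2),\ldots,(i,-2)$, and the segment to $t_{i-1}$ can clip robots still waiting on $\mathcal{L}_H(r_l)$. The paper handles this by stipulating that a ``Move to $p$'' instruction is realized as a piecewise-linear path (at most three segments under rigid movement, cf.\ Fig.~\ref{fig: stage2_robot_move}); without that convention, or an explicit geometric argument that the straight segment avoids all parked disks, your obstruction-freedom claim does not go through. The rest of your argument (uniqueness of the mover via the guard of line \ref{code s2: 0}, stability of all other robots during a move, and recovery from non-rigid interruptions via lines \ref{code s2: 3} and \ref{code s2: 6}) follows the paper's reasoning and is sound.
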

  
  \begin{proof}
  Ordering of the robots and the target points are shown in Fig. \ref{fig: stage2_robot_order} and Fig. \ref{fig: stage2_robot_target} respectively. We shall first show that the non-leader robots $r_1, \ldots, r_{n-1}$  will move sequentially and place themselves on $\mathcal{L}_H(r_l)$ so that at some $T'' > T_1$, we shall have $r_0 = r_l, r_1, \ldots, r_{n-1}$ at $(0,-2), (1,-2), \ldots, (n-1, -2)$ (See Fig. \ref{fig: stage2_robot_line}). At the beginning, $r_1$ will find that the conditions of lines \ref{code s2: 0} and \ref{code s2: 1} (in Algorithm \ref{main_algorithm: stage 2}) are satisfied, while that of line \ref{code s2: 2} does not hold. So, it will decide to move to $(1,-2)$. Now suppose that $r_0 = r_l, r_1, \ldots, r_{i}$ $(i \geq 1)$ are at $(0,-2), (1,-2), \ldots, (i, -2)$, then $r_{i+1}$ will find that the condition of line \ref{code s2: 3} is satisfied and therefore will decide to move to $(i+1,-2)$. 
   
   Notice that when any robot $r_{j}, 1 \leq j \leq n-1,$ is moving, even if its move is interrupted due to the non-rigid movement assumption, it will again decide to move to $(j,-2)$ in the next cycle. Also, during the move, no other robot will decide to move. This is because 1) (if $j < n-1$) for $r_{j+1}, \ldots, r_{n-1}$, the condition of line \ref{code s2: 0} does not hold, 2) (if $j > 1$) for $r_{1}, \ldots, r_{j-1}$, the condition of line \ref{code s2: 4} does not hold and 3) for $r_l$, the condition of line \ref{code s2: 5} does not hold. Also notice  that since the robots have a physical extent, depending on the starting position and destination, a robot may not be able to move to its destination linearly in one go. In that case, it will move in a piecewise linear path (with at most three segments in case of rigid movement) as shown in Fig. \ref{fig: stage2_robot_move}. In that case, the `Move to $(j,-2)$' instructions in the pseudocode should be understood accordingly. Therefore, at some $T'' > T_1$, we shall have a stable configuration with $r_0 = r_l, r_1, \ldots, r_{n-1}$ at $(0,-2), (1,-2), \ldots, (n-1, -2)$. 
   
   Now we shall show that there is $T_2 > T''$, such that $\mathbb{C}(T_2)$ is a final configuration similar to the given pattern. Notice that in this configuration, the agreement in `up' and `down' is lost. When $r_1$ takes a snapshot for the first time after $T''$, it finds the conditions of line \ref{code s2: 4} to be true as both $\mathcal{H}_U^O(r_1)$ and $\mathcal{H}_B^O(r_1)$ (according to its own notion of `up' and `down') has no robots. Since $r_1$ can see $r_l$, it will take the center of $r_l$ as $(0,-2)$ and fix the positive direction of $Y$-axis according to it own notion of `up', compute the point $t_0$ in that coordinate system and move to that point. After the move, it will turn its light to \texttt{done}. Now suppose that $r_1, \ldots, r_{i}$ $(i \geq 1)$ are at $t_0, \ldots, t_{i-1}$ with light set to \texttt{done}. Obviously, here $r_{i+1}$ will have no ambiguity regarding the $Y$-axis. So, $r_{i+1}$ will find the conditions of line \ref{code s2: 4} to be true, compute the point $t_{i}$ in the agreed coordinate system and move accordingly.
   
   As before, the movements of the robots may not be linear due to their physical extent (See Fig. \ref{fig: stage2_robot_move2}). The move of any robot $r_{j}, 1 \leq j \leq n-1,$ may be interrupted due to the non-rigid movement assumption. The ordering of the target points are such that when it takes snapshot in the next cycle, it will be able to see all the robots on $\mathcal{L}_H(r_l)$ and hence, will find the condition of line \ref{code s2: 6} holding (with $i = n-j-1$). So, it will recompute $t_{j-1}$ and move accordingly. Now we shall argue that no other robot will decide to move during the move of any robot $r_{j}, 1 \leq j \leq n-1$. This is because during the movement of any robot $r_{j}$, all the robots on $\mathcal{L}_H(r_l)$ will be able to see it with light set to \texttt{off}. Therefore, 1) (if $j < n-1$) for $r_{j+1}, \ldots, r_{n-1}$, the condition of line \ref{code s2: 4} does not hold, and 2) for $r_l$, the condition of line \ref{code s2: 5} does not hold. Also, (if $j > 1$) for $r_{1}, \ldots, r_{j-1}$, their lights are set to \texttt{done} and hence, they will remain stable.  Now, only after all of $r_1, \ldots, r_{n-1}$ have completed their moves and turned their lights to \texttt{done}, $r_l = r_0$ will find condition of line \ref{code s2: 5} to hold. Then $r_0$ will move towards $t_{n-1}$. If it stops in between, it can identify $r_{n-1}$ from its local view. Therefore, it knows the points $t_{n-2}$ on the plane and knows its coordinate in the agreed coordinate system from the given input. From this it can recompute $t_{n-1}$ and hence will eventually reach there. Therefore, the given pattern is formed at some $T_2 > T_1$. \qed
  \end{proof}

 \begin{figure}[p]
\centering
\subcaptionbox[Short Subcaption]{ A leader configuration with the leader $r_l$ with light \texttt{leader} at $(0,-2)$ in the agreed coordinate system.
       \label{fig: stage2_robot_order}
}
[
    0.48\textwidth 
]
{
    \fontsize{8pt}{8pt}\selectfont
    \def\svgwidth{0.48\textwidth}
    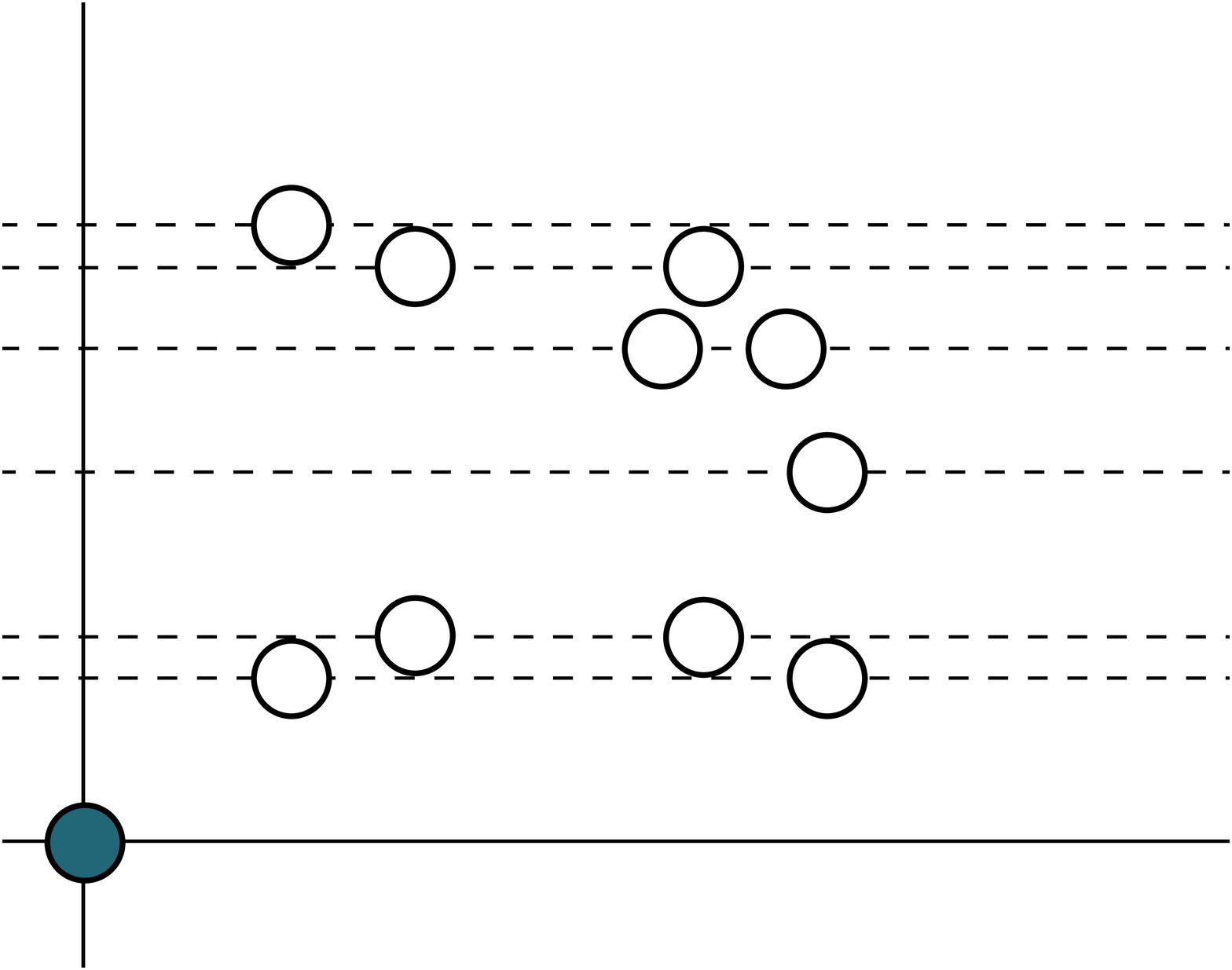
}
\hfill
\subcaptionbox[Short Subcaption]{ The pattern $\mathbb{P}$ embedded in the agreed coordinate system.
     \label{fig: stage2_robot_target}
}
[
    0.48\textwidth 
]
{
    \fontsize{8pt}{8pt}\selectfont
    \def\svgwidth{0.48\textwidth}
    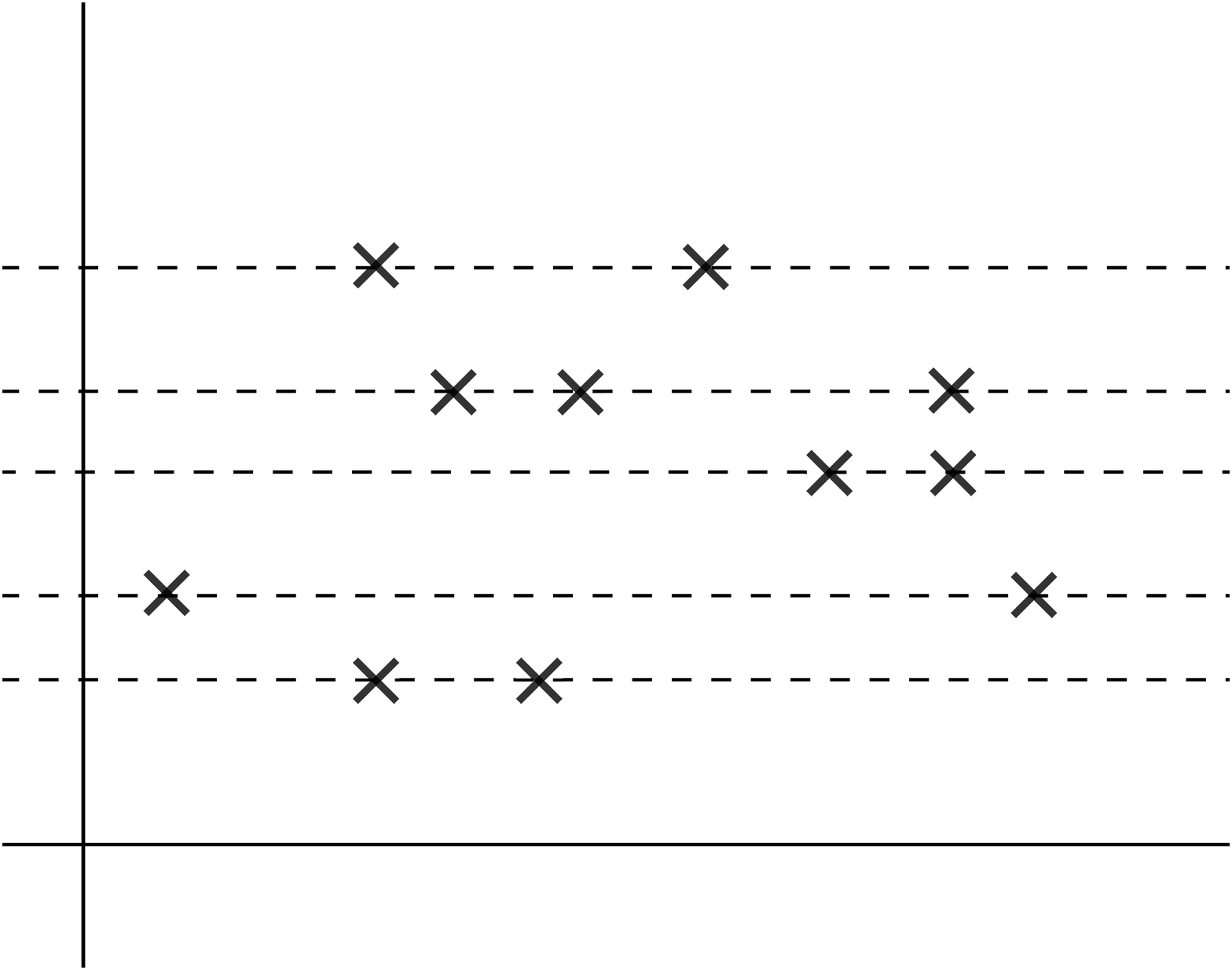
}
\hfill
\subcaptionbox[Short Subcaption]{ Movement of $r_8$ to $\mathcal{L}_H(r_l)$.
     \label{fig: stage2_robot_move}
}
[
    0.48\textwidth 
]
{
    \fontsize{8pt}{8pt}\selectfont
    \def\svgwidth{0.48\textwidth}
    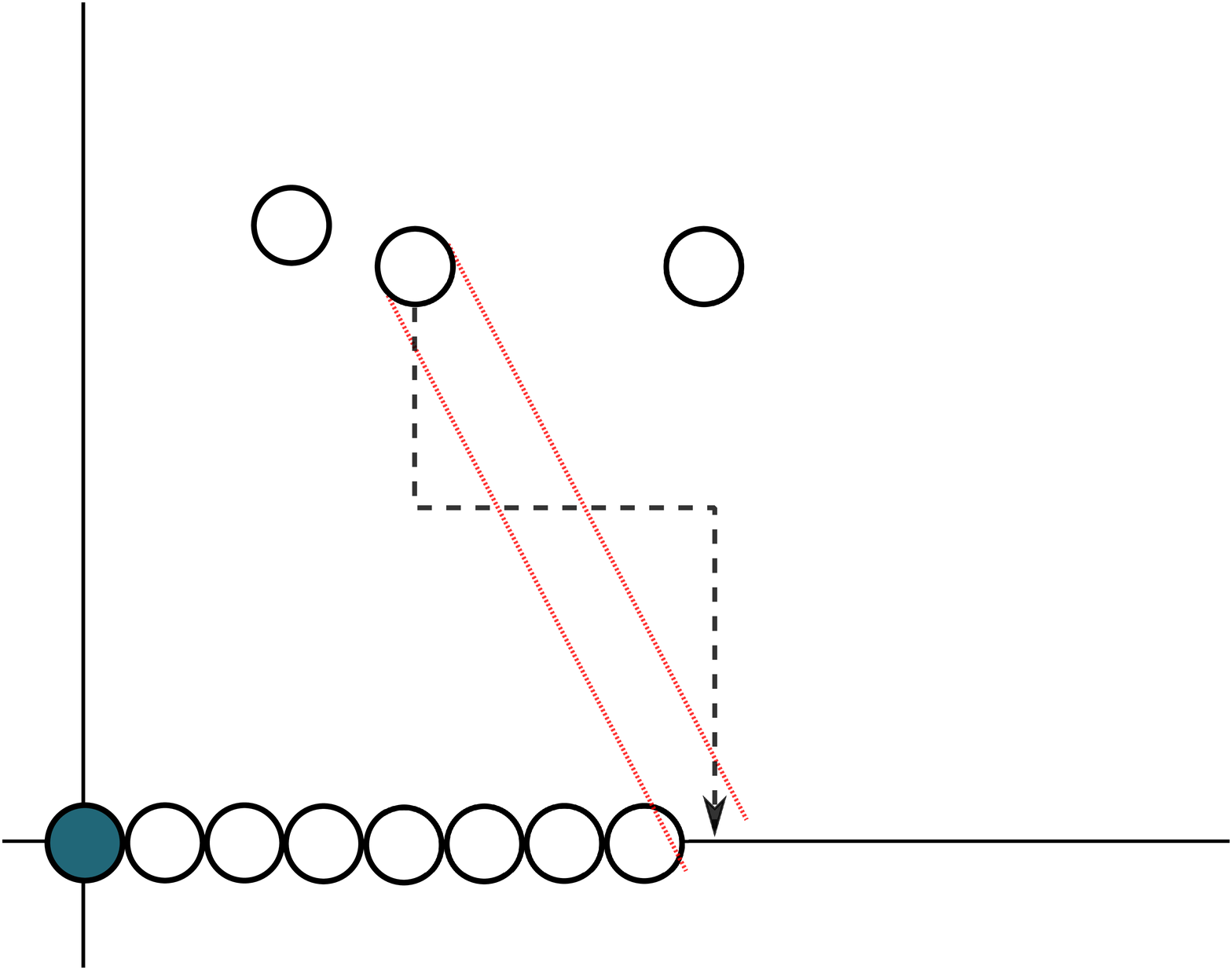
}
\hfill
\subcaptionbox[Short Subcaption]{ $r_0 = r_l, r_1,$ $\ldots, r_{n-1}$ are at $(0,-2), (1,-2),$ $\ldots, (n-1, -2)$.
     \label{fig: stage2_robot_line}
}
[
    0.48\textwidth 
]
{
    \fontsize{8pt}{8pt}\selectfont
    \def\svgwidth{0.48\textwidth}
    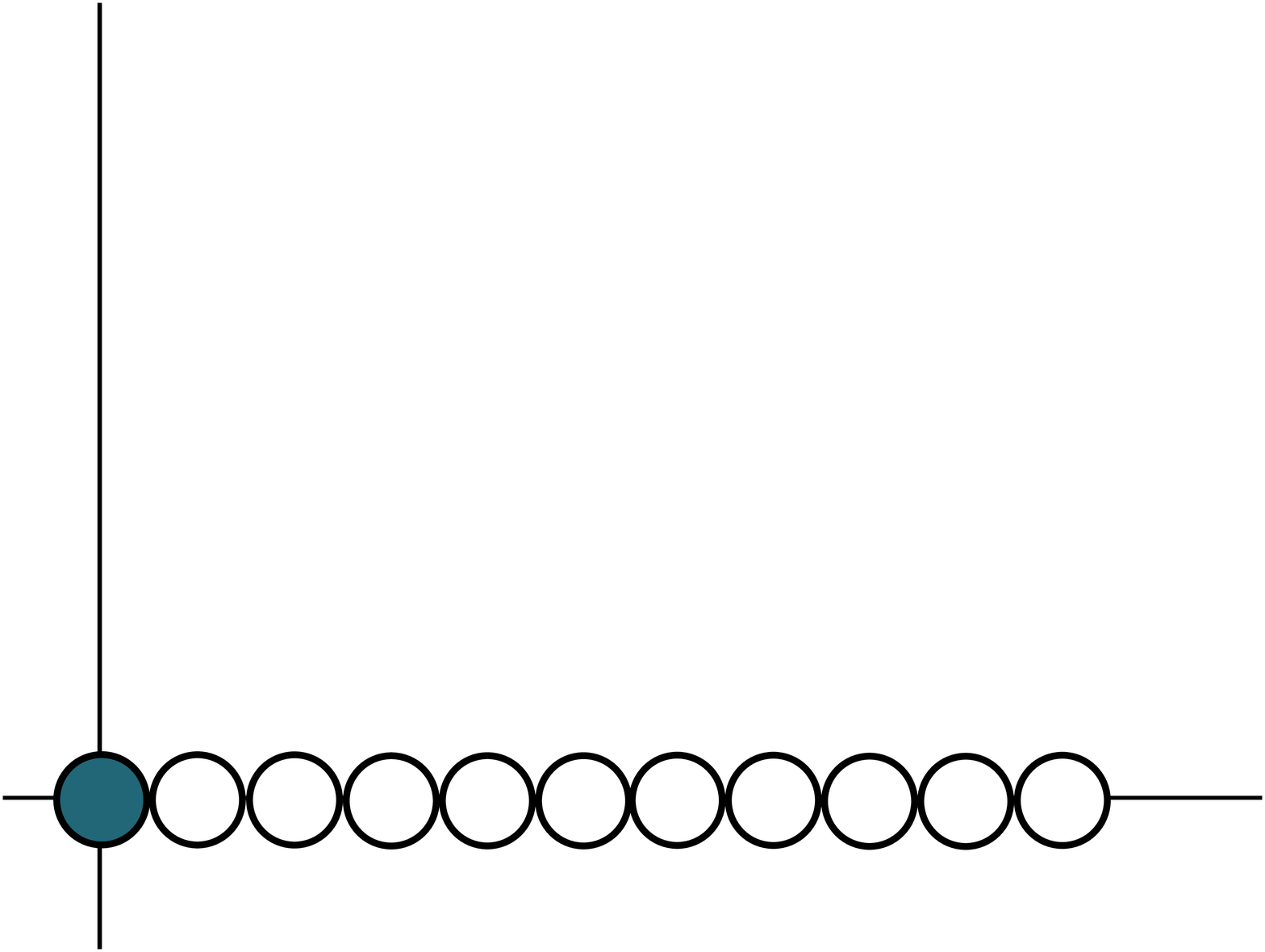
}
\hfill
\subcaptionbox[Short Subcaption]{ Movement of $r_3$ to $t_2$.
     \label{fig: stage2_robot_move2}
}
[
    0.48\textwidth 
]
{
    \fontsize{8pt}{8pt}\selectfont
    \def\svgwidth{0.48\textwidth}
    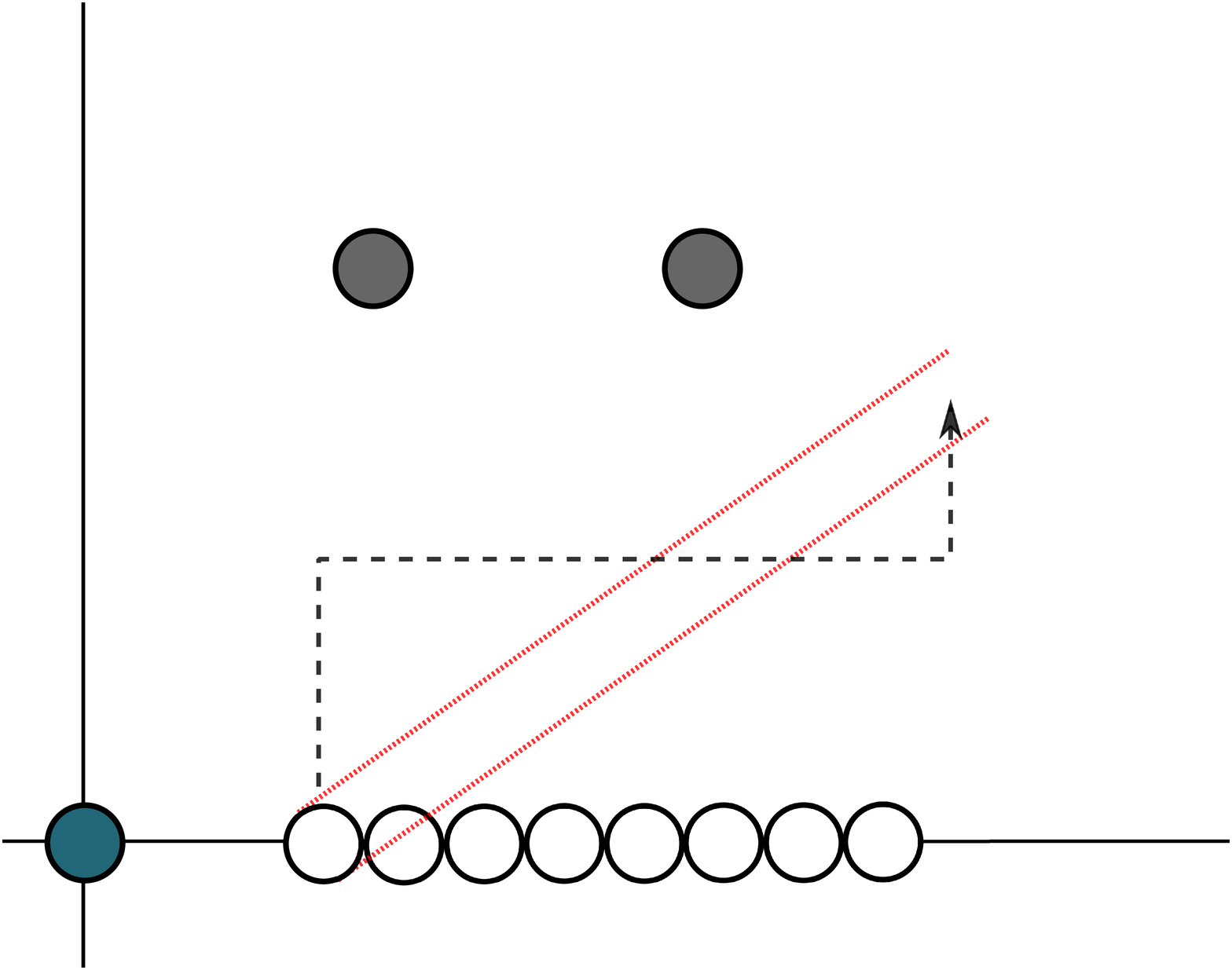
}
\hfill
\subcaptionbox[Short Subcaption]{Movement of $r_0$ to $t_{10}$.
     \label{fig: stage2_final}
}
[
    0.48\textwidth 
]
{
    \fontsize{8pt}{8pt}\selectfont
    \def\svgwidth{0.48\textwidth}
    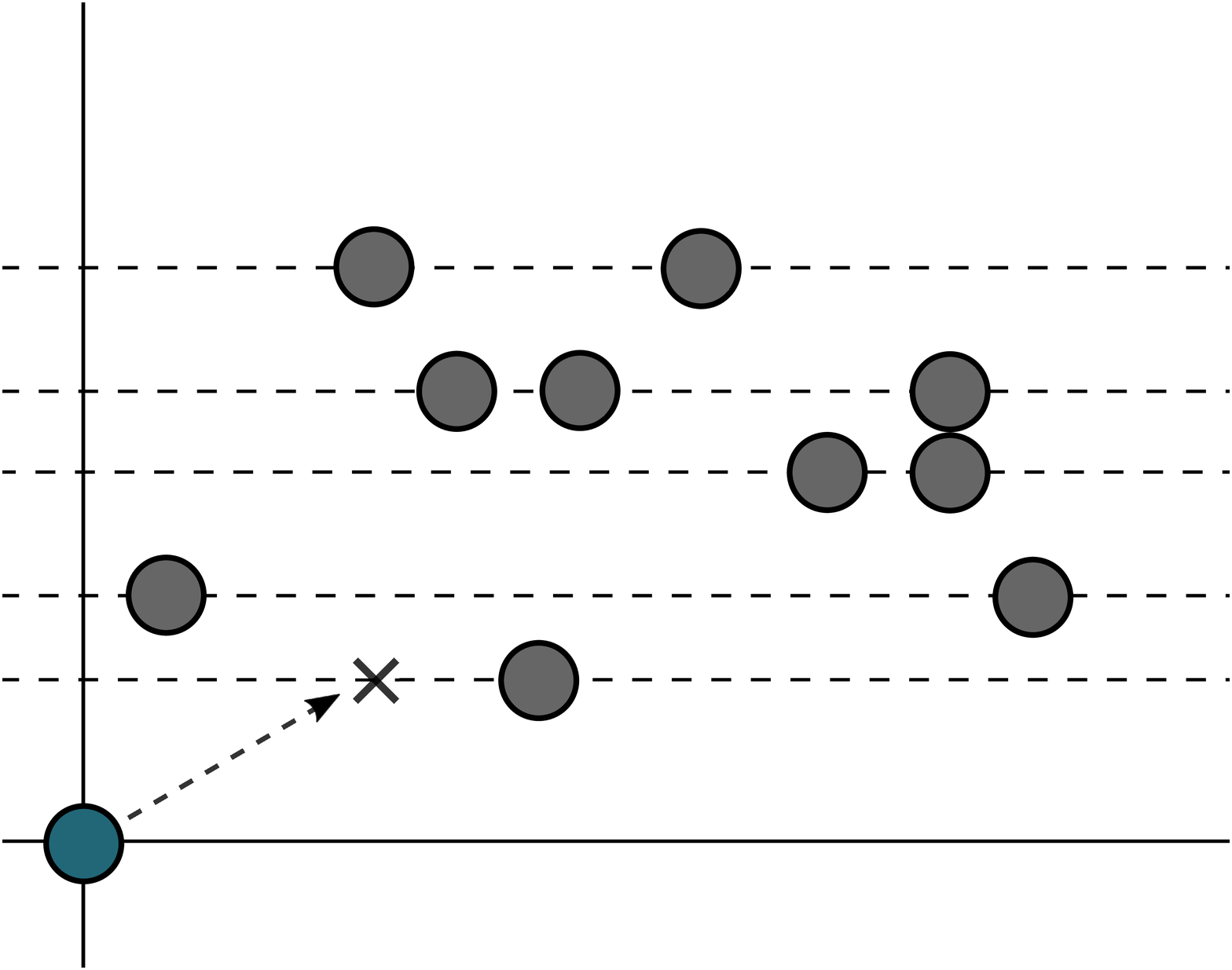
}
\caption[Short Caption]{Execution of Stage 2. }
\label{}
\end{figure}

\clearpage

\section{Efficiency of the Algorithm}

We shall study the efficiency of the algorithm in terms of the total number of moves executed by all the robots in the team. For the efficiency analysis, we shall consider rigid movements and semi-synchronous scheduler. That is, we shall assume the following.
\begin{enumerate}
 \item Time is logically divided into global rounds. In each round, a finite but non-zero number of the robots are activated. Every robot is activated infinitely often.
 
 \item In each round, all activated robots take the snapshots at the same time, and then perform their moves simultaneously, and completes their moves before the end of the round. As a result, no robot $r$ sees another robot $r'$ while $r'$ is moving.

 \item Each robot is able to reach its computed destination without any interruption.
 
 \end{enumerate}

 In this setting, we shall calculate the total number of moves required, in the worst case, in order to reach the final configuration. In Stage 1, all robots, except the one that eventually becomes the leader, execute $O(1)$ moves each. The robot that becomes the leader, will need $\Theta(n)$ moves, in the worst-case, while executing \textsc{BecomeLeader()}. In Stage 2, it is easy to see that all the robots need $O(1)$ moves each. Therefore, the total number of moves executed by all the robots in the team is $\Theta(n)$. This is also asymptotically optimal. To see this, consider an initial configuration where all robots are collinear. Then if the pattern to be formed does not have three collinear robots, then at least $n-2$ robots need to move. So, the total number of moves required to solve $\mathcal{APF}$ is $\Omega(n)$. Therefore, we can conclude as the following.

\begin{theorem}\label{thm_stage2}
 Algorithm \ref{main_algorithm} solves $\mathcal{APF}$ in asymptotically optimal number of moves.  
\end{theorem}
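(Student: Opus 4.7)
The plan is to split the argument into an upper bound and a matching lower bound on the total number of moves. Correctness is already established by Theorem~\ref{thm_s1} and Theorem~\ref{thm_stage2}, so I only need to count moves under the stated rigid/semi-synchronous assumptions. I would organize the upper bound stage by stage, assigning a constant-size move budget to each robot except for the eventual leader, which needs a careful separate accounting.

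For Stage~1, I would first argue Phase~1: only the robots of $B_1$ of $\mathbb{C}(0)$ ever move, and each of them moves at most once (the two terminal robots of $B_1$ each make a single leftward hop to become $\frac{n+3}{2}$ away from $B_2$, possibly preceded by a color change; all interior robots of $B_1$ only set their light and stay put). So Phase~1 contributes $O(1)$ moves total. For Phase~2, I would proceed batch by batch: when $B_j$, $2\le j\le i-1$, is activated, each of its robots makes exactly one coordinated leftward move to reach the $1+\tfrac{1}{n}$ spacing from $B_{j-1}$; the terminal robots of $B_{j-1}$ contribute only color changes, no moves. Summing over the (at most $n/2$) batches that attempt to break symmetry yields $O(n)$ moves in Phase~2 excluding the eventual leader. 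The robot $r$ that executes \textsc{BecomeLeader()} is the only one performing a nontrivial number of moves on its own: at most one horizontal nudge (Case~2), one vertical move to clear $\mathcal{L}_H(r)$, and then a sequence of alignments with successive left batches before switching its light to \texttt{leader}. Because there are at most $i-1 \le n/2$ such batches to pass through, this contributes $\Theta(n)$ moves in the worst case. Totaling, Stage~1 uses $\Theta(n)$ moves.

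For Stage~2, I would use the pseudocode of Algorithm~\ref{main_algorithm: stage 2} to certify that each non-leader robot $r_j$ moves at most twice: once from its initial slot to its position $(j,-2)$ on $\mathcal{L}_H(r_l)$, and once from $(j,-2)$ to its target point $t_{j-1}$. The leader $r_l$ moves exactly once, to $t_{n-1}$. (Rigidity and the semi-synchronous scheduler are exactly what rule out the non-rigid re-tries that the correctness proof of Theorem~\ref{thm_stage2} handles.) Since the piecewise-linear paths that the fatness imposes are still single ``moves'' in the LCM sense, this gives $O(n)$ moves for Stage~2. Combined with Stage~1, the grand total is $\Theta(n)$.

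For the lower bound I would reproduce the argument sketched just before the statement: take $\mathbb{C}(0)$ with all $n$ robots collinear and a target pattern $\mathbb{P}$ that contains no three collinear points. Any final configuration similar to $\mathbb{P}$ also contains no three collinear points, so at most two of the robots can remain on their original supporting line; the other $n-2$ must have moved at least once. Hence any algorithm requires $\Omega(n)$ moves, matching the upper bound and proving asymptotic optimality. The main obstacle I anticipate is the bookkeeping in Stage~1 for the robot that executes \textsc{BecomeLeader()}, since it is the only robot whose per-robot move count is not $O(1)$; I would handle this by explicitly listing the alignment steps (one per batch to its left) and invoking the bound $i-1 < n/2$ obtained inside the proof of Theorem~\ref{thm_phase2}.
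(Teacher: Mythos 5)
Your proposal is correct and follows essentially the same route as the paper: a per-robot $O(1)$ move budget in both stages for every robot except the eventual leader, a $\Theta(n)$ accounting for the leader's batch-by-batch leftward alignment in \textsc{BecomeLeader()}, and the matching $\Omega(n)$ lower bound from a collinear initial configuration with a target pattern containing no three collinear points. The only difference is that you spell out the phase-by-phase bookkeeping that the paper compresses into ``execute $O(1)$ moves each,'' which is a harmless (and arguably welcome) elaboration.
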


\end{document}